	\newcommand{\ecc}[1]{\texttt{ecEnc}\left( #1 \right)}
	\newcommand{\bjconst}{C}
	\newcommand{\amdconst}{C_a}
	\newcommand{\naorconst}{C_h}
	\newcommand{\eccconst}{C_e}
	\newcommand{\ball}{polynomial evaluation tuple}
	\newcommand{\balls}{polynomial evaluation tuples}
	\newcommand{\bin}{field element}
	\newcommand{\bins}{field elements}
	\newcommand{\ecci}[1]{\texttt{ecDec}\left( #1 \right)}
	\newcommand{\getpoly}[3]{\texttt{GetPolynomial}\left(#1, #2, #3\right)}
	\newcommand{\fp}[4]{\texttt{h}\left( #1, #2, #3, #4 \right)}
	\newcommand{\listen}[1]{\texttt{Listen}\left( #1 \right)}
	\newcommand{\silence}[1]{\texttt{IsSilence}\left( {#1} \right)}
	\newcommand{\zero}[1]{\mathbf{0}_{#1}}
	\newcommand{\terminate}[0]{\textbf{Terminate}}
	\newcommand{\amdenc}[2]{\texttt{amdEnc}\left( #1, #2 \right)}
	\newcommand{\amddec}[2]{\texttt{amdDec}\left( #1, #2 \right)}
	\newcommand{\iscodeword}[2]{\texttt{IsCodeword}\left( #1, #2 \right)}
	\newcommand{\ie}{\textit{i}.\textit{e}.}
	\newtheorem{mythm}{Theorem}
	\numberwithin{equation}{section}
	\numberwithin{mydef}{section}
	\numberwithin{mythm}{section}
\begin{document}
\title{Sending a Message with Unknown Noise}

\author{Abhinav Aggarwal}
\affiliation{%
  \institution{University of New Mexico}
  \streetaddress{Albuquerque, New Mexico, USA}}
\email{abhiag@unm.edu}

\author{Varsha Dani}
\affiliation{%
  \institution{University of New Mexico}
  \streetaddress{Albuquerque, New Mexico, USA}}
\email{varsha@cs.unm.edu}

\author{Thomas P. Hayes}
\affiliation{%
  \institution{University of New Mexico}
  \streetaddress{Albuquerque, New Mexico, USA}}
\email{hayes@cs.unm.edu}

\author{Jared Saia}
\affiliation{%
  \institution{University of New Mexico}
  \streetaddress{Albuquerque, New Mexico, USA}}
\email{saia@cs.unm.edu}

\begin{abstract}
Alice and Bob are connected via a two-way channel, and Alice wants to send a message of $L$ bits to Bob.  An adversary flips an arbitrary but finite number of bits, $T$, on the channel.  This adversary knows our algorithm and Alice's message, but does not know any private random bits generated by Alice or Bob, nor the bits sent over the channel, except when these bits can be predicted by knowledge of Alice's message or our algorithm.  We want Bob to receive Alice's message and for both players to terminate, with  error probability at most $\delta > 0$, where $\delta$ is a parameter known to both Alice and Bob.  Unfortunately, the value  $T$ is unknown in advance to either Alice or Bob, and the value $L$ is unknown in advance to Bob. 

We describe an algorithm to solve the above problem while sending an expected 
$L + O \left( T + \min \left(T+1,\frac{L}{\log L} \right) \log \left( \frac{L}{\delta} \right) \right)$ bits.  A special case is when $\delta = O(1/L^c)$, for some constant $c$.  Then when $T = o(L/\log L)$, the expected number of bits sent is $L + o(L)$, and when $T = \Omega(L)$, the expected number of bits sent is $L + O\left( T \right)$, which is asymptotically optimal.
\end{abstract}

%
% The code below should be generated by the tool at
% http://dl.acm.org/ccs.cfm
% Please copy and paste the code instead of the example below. 
%
\begin{CCSXML}
<ccs2012>
<concept>
<concept>
<concept_id>10002978.10003014.10003015</concept_id>
<concept_desc>Security and privacy~Security protocols</concept_desc>
<concept_significance>300</concept_significance>
</concept>
<concept_id>10002950.10003712</concept_id>
<concept_desc>Mathematics of computing~Information theory</concept_desc>
<concept_significance>500</concept_significance>
</concept>
<concept>
<concept_id>10010147.10010919.10010172</concept_id>
<concept_desc>Computing methodologies~Distributed algorithms</concept_desc>
<concept_significance>500</concept_significance>
</concept>
</ccs2012>
\end{CCSXML}

\ccsdesc[500]{Mathematics of computing~Information theory}
\ccsdesc[500]{Computing methodologies~Distributed algorithms}
\ccsdesc[300]{Security and privacy~Security protocols}

\keywords{Reed Solomon Codes, Interactive Communication, Adversary, Polynomial, AMD Codes, Error Correction Codes, Fingerprinting}

\maketitle

\section{Introduction}	

What if we want to send a message over a noisy two-way channel, and little is known in advance?  In particular, imagine that Alice wants to send a message to Bob, but the number of bits flipped on the channel is unknown to either Alice or Bob in advance.  Further, the length of Alice's message is also unknown to Bob in advance.  While this scenario seems like it would occur quite frequently, surprisingly little is known about it.

In this paper, we describe an algorithm to efficiently address this problem.  To do so, we make a critical assumption on the type of noise on the channel.  We assume that an adversary flips bits on the channel, but this adversary is not completely omniscient.  The adversary knows our algorithm and Alice's message, but it \emph{does not} know the private random bits of Alice and Bob, nor the bits that are sent over the channel, except when these bits do not depend on the random bits of Alice and Bob.  Some assumption like this is necessary : if the adversary knows all bits sent on the channel and the number of bits it flips is unknown in advance, then no algorithm can succeed with better than constant probability (see Theorem 6.1 from~\cite{icalp15} for details\footnote{Essentially, in this case, the adversary can run a man-in-the-middle attack to fool Bob into accepting the wrong message}).

Our algorithm assumes that a desired error probability, $\delta > 0$ is known to both Alice and Bob, that the adversary flips some number $T$ bits that is finite but unknown in advance, and that the length of Alice's message, $L$ is unknown to Bob in advance. Our main result is then summarized in the following theorem.

%	What if we want to tune the cost of sending a message over a noisy two-way channel to depend on the amount of noise, even when the noise rate is unknown?  Can we still do it when the receiver of the message does not know the length of the message? In particular, consider the case where Alice wants to send a message of length $L$ to Bob (unaware of $L$), over a two-way channel, in the presence of an adversary that can flip an unknown number of bits, $T$, on the channel.  We assume that a desired error probability $\delta > 0$ is public knowledge.  We also assume that the  adversary knows our algorithm, but does not know the private random bits of Alice and Bob, or the bits sent on the channel, except when these bits do not depend on the private random bits of the algorithm.

	\begin{mythm}
		\label{thm:mainThm}
		Our algorithm tolerates an unknown number of adversarial errors, $T$, and for any $\delta > 0$, succeeds in sending a message of length $L$ with probability at least $1-\delta$, and sends an expected $L + O \left( T + \min \left(T+1,\frac{L}{\log L} \right) \log \left( \frac{L}{\delta} \right) \right)$  bits.	
				\end{mythm}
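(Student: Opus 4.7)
The plan is to decouple the proof into two parts: (i) a correctness argument bounding the probability that Bob outputs the wrong message or the protocol fails to terminate, and (ii) a bit-complexity argument bounding the expected number of bits transmitted.

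For correctness, I would identify the ``bad'' events associated with each round of communication. These are of two kinds: an AMD-decoding failure, in which the adversary successfully forges a control codeword that Bob accepts as authentic, and a fingerprint collision, in which a corrupted Reed--Solomon chunk happens to hash to the same value as the intended one. By construction, the AMD code detects any algebraic manipulation with probability $1 - O(1/q)$ over a suitably large field $\mathbb{F}_q$, and a Carter--Wegman-style fingerprint has collision probability $O(L/q)$. Choosing $q = \mathrm{poly}(L/\delta)$ makes each round's failure probability small enough that a union bound over the (random) number of rounds yields total failure probability at most $\delta$. The key subtlety is that the adversary is adaptive but cannot see the private randomness used to sample the fingerprint, so the standard information-theoretic guarantees apply despite the adversary knowing both the algorithm and the message.

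For the bit-complexity bound I would use a potential/charging argument. Let $r$ be the number of rounds actually executed. Each round has a fixed overhead of $O(\log(L/\delta))$ bits for control traffic (AMD-encoded fingerprints, ACK/silence signaling), plus an amortized payload cost that telescopes to $L$ over a successful execution. The central claim is $\mathbb{E}[r] = O(\min(T+1, L/\log L))$. On the one hand, each wasted round can be charged to at least one adversarial bit flip, giving $r \leq T+1$ up to constants; on the other hand, even on a noiseless channel the message is split into at most $O(L/\log L)$ polynomial evaluation chunks, each carrying $\Theta(\log L)$ bits, so $r = O(L/\log L)$ unconditionally. Combining, the per-round contribution sums to $\min(T+1, L/\log L) \cdot O(\log(L/\delta))$. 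The remaining additive $O(T)$ term is charged directly: each adversarial flip can induce at most a constant number of extra payload or retransmission bits beyond the per-round accounting.

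The main obstacle I anticipate is the charging argument: showing that wasted rounds are genuinely paid for by adversarial errors. Because $T$ is unknown, the algorithm presumably uses a doubling schedule on the error-correction overhead, and I would need to rule out the adversary exploiting the schedule to force $\omega(T)$ spurious rounds by cleverly placing a few flips at phase boundaries. The argument is that, conditioned on the adversary's remaining budget $T'$, the probability the protocol continues for more than a constant number of additional rounds decays geometrically, because each continuation requires Bob to reject a correctly received codeword (fingerprint false negative) or accept a tampered one (fingerprint or AMD false positive), each of which occurs with probability at most $O(1/\mathrm{poly}(L/\delta))$. A secondary but more routine obstacle is upgrading the high-probability correctness bound into an \emph{expected}-bits guarantee, which follows by multiplying the geometric tail of the round-count distribution against the linear per-round cost and observing that the contribution from the rare $\delta$-probability failure mode is absorbed into the stated big-$O$.
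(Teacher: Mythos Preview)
Your decomposition into correctness and bit complexity is reasonable, and the correctness sketch is close to the paper's (you omit a third bad event---Bob's random ``keep-alive'' string being misread by Alice as silence, causing premature termination---but that is easily patched). The substantive gap is in the bit-complexity argument, specifically in how you derive the $\min(T+1,L/\log L)$ factor.

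Your claim that ``$r = O(L/\log L)$ unconditionally'' because ``the message is split into at most $O(L/\log L)$ polynomial evaluation chunks'' misreads the protocol. Alice sends all $d+1\approx L/\log L$ evaluations \emph{up front} in plaintext; each subsequent round sends only two additional evaluations plus fingerprints. On a noiseless channel the number of rounds is $O(1)$, not $O(L/\log L)$; with noise, the adversary can force arbitrarily many rounds by corrupting fingerprints or echoes, so there is no unconditional $L/\log L$ cap on $r$. Hence the $\min$ cannot come from separately bounding $r$ by $T+1$ and by $L/\log L$ as you propose. (If instead you intend a chunk-per-round protocol, then $r\ge L/\log L$ even when $T=0$, so $\mathbb{E}[r]=O(\min(T+1,L/\log L))$ fails in the other direction, and the resulting cost has an extra $\frac{L}{\log L}\log(1/\delta)$ term not present in the theorem.)

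The paper's route to the $\min$ is a structural lemma you are missing: if Bob's reconstructed polynomial is still wrong at the end of round $r$, then at least $r/4$ of the evaluation tuples he holds must be corrupted (this is where the Reed--Solomon decoding radius is actually used). Let $r'$ be the last such round. In the case $r'\le 4(d+1)$ one gets $r'\le 4\min(T,d+1)$ directly, yielding the $\min$ term. In the case $r'>4(d+1)$, more than $d+1$ tuples are bad, but every tuple beyond the first $d+1$ was ECC-wrapped, so each bad one cost the adversary $\Omega(\log(L/\delta))$ flips; this forces $T=\Omega\big((r'-d)\log(L/\delta)+ (r'-d)^2/d\big)$, and the entire pre-$r'$ cost is absorbed into $O(T)$. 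Rounds after $r'$ are handled by your ``one flip per wasted round'' idea, but strengthened: each such round requires corrupting a $\Theta(b_j)$-bit ECC-protected fingerprint or echo, so that segment costs $O(T+\log(L/\delta))$. Your single charging rule (``each wasted round charged to one flip'') is too coarse to recover this case split and the quadratic term that makes Case~II go through.

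Finally, Theorem~1.1 is for the setting where Bob does not know $L$; your proposal does not address this at all. The paper handles it by a two-phase reduction: first transmit the integer $L$ (a $\log L$-bit message) using a separate black-box protocol with error budget $\delta/2$, incurring $O(\log L\cdot\log\log L + T_1)$ bits, then run the known-$L$ algorithm with error budget $\delta/2$. You would need this reduction or an equivalent.
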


An interesting case to consider is when the error probability is polynomially small in $L$, i.e. when $\delta = O(1/L^c)$, for some constant $c$.  Then when $T = o(L/\log L)$, our algorithm sends $L + o(L)$ expected bits.  When $T = \Omega(L)$, the number of bits sent is $L + O\left( T \right)$, which is asymptotically optimal.

	% TODO : Revise multiparty paper citation (include citation from iacr)

 \subsection{Related Work}

\noindent
\textbf{Interactive Communication} Our work is related to the area of interactive communication.  The problem of interactive communication asks how two parties can run a protocol $\pi$ over a noisy channel.  This problem was first posed by Schulman~\cite{schulman:deterministic,schulman:communication}, who describes a deterministic method for simulating interactive protocols on noisy channels with only a constant-factor increase in the total communication complexity. This initial work spurred vigorous interest in the area (see~\cite{braverman:coding} for an excellent survey).

\par Schulman's scheme tolerates an adversarial noise rate of $1/240$, even if the adversary is \emph{not} oblivious.  It critically depends on the notion of a {\it tree code} for which an exponential-time construction was originally provided. This exponential construction time motivated work on  more efficient constructions~\cite{braverman:towards-deterministic,peczarski:improvement,moore:tree}.  There were also efforts to create alternative codes~\cite{gelles:efficient,ostrovsky:error}.  Recently, elegant computationally-efficient schemes that tolerate a constant adversarial noise rate have been demonstrated~\cite{brakerski:efficient,ghaffari:optimal2}. Additionally, a large number of results have improved the tolerable adversarial noise rate~\cite{brakerski:fast,braverman:towards,ghaffari:optimal,franklin:optimal,braverman:list}, as well as tuning the communication costs to a known, but not necessarily constant, adversarial noise rate~\cite{haeupler2014interactive}.\\

\noindent
\textbf{Interactive Communication with Private Channels}	 Our paper builds on a recent result on interactive communication by Dani et al~\cite{icalp15}.  The model in~\cite{icalp15} is equivalent to the one in this paper except that 1) they assume that Alice and Bob are running an arbitrary protocol $\pi$; and 2) they assume that both Alice and Bob know the number of bits sent in $\pi$.  In particular, similar to this paper, they assume that the adversary flips an unknown number of bits $T$, and that the adversary does not know the private random bits of Alice and Bob, or the bits sent over the channel.

If the protocol $\pi$ just sends $L$ bits from Alice to Bob, then the algorithm from~\cite{icalp15} can solve the problem we consider here.  In that case, the algorithm of~\cite{icalp15} will send an expected $L + O \left( \sqrt{L(T+1)\log L} + T \right)$ bits, with a probability of error that is $O(1/L^c)$ for any fixed constant $c$.  

For the same probability of error, the algorithm in this paper sends an expected $L + O(\min((T+1)\log L),L) + T)$ bits.  This is never worse than~\cite{icalp15}, and can be significantly better.  For example, when $T=O(1)$, our cost is $L + O(\log L)$ versus $L + O(\sqrt{L\log L})$ from~\cite{icalp15}.  In general if $T = o(L/\log L)$ our cost is asymptotically better than~\cite{icalp15}.  Additionally, unlike~\cite{icalp15}, the algorithm in this paper does not assume that $L$ is known in advance by Bob.

An additional results of~\cite{icalp15} is a theorem showing that private channels are necessary in order to tolerate unknown $T$ with better than constant probability of error.\\

\noindent
\textbf{Rateless Codes}  Rateless error correcting codes enable generation of potentially an infinite number of encoding symbols from a given set of source symbols with the property that given any subset of a sufficient number of encoding symbols, the original source symbols can  be recovered. Fountain codes~\cite{mackay2005fountain,mitzenmacher2004digital} and LT codes~\cite{palanki2004rateless,luby2002lt,hashemi2014near} are two classic examples of rateless codes. 	Erasure codes employ feedback for stopping transmission~\cite{palanki2004rateless,luby2002lt} and for error detection~\cite{hashemi2014near} at the receiver.

Critically, the feedback channel, i.e. the channel from Bob to Alice, is typically assumed to be noise free. We differ from this model in that we allow noise on the feedback channel, and additionally, we tolerate bit flips, while most rateless codes tolerate only bit erasures.

	\subsection{Formal Model}

\paragraph{Initial State}
	We assume that Alice initially knows some message $M$ of length $L$ bits that she wants to communicate to Bob, and that both Alice and Bob know an error tolerance parameter $\delta >0$.  However, Bob does not know $L$ or any other information about $M$ initially.  Alice and Bob are connected by a two-way binary communication channel.

\paragraph{The Adversary} We assume an adversary can flip some \textit{a priori} unknown, but finite number of bits $T$ on the channel from Alice to Bob or from Bob to Alice.  This adversary knows $M$, and all of our algorithms.  However, it does not know any random bits generated by Alice or Bob, or the bits sent over the channel, except when these can be determined from other known information.
			
	\paragraph{Channel steps}
	We assume that communication over the channel is synchronous. A \emph{channel step} is defined as the amount of time that it takes to send one bit over the channel.  As is standard in distributed computing, we assume that all local computation is instantaneous.
		
	\paragraph{Silence on the channel}
	Similar to~\cite{icalp15}, when neither Alice nor Bob sends in a channel step, we say that the channel is silent. In any contiguous sequence of silent channel steps, the bit received on the channel in the first step is set by the adversary for free. By default, the bit received in the subsequent steps of the sequence remains the same, unless the adversary pays for one bit flip each time it wants to change the value of the bit received.
	
	\subsection{Paper organization}
	The rest of the paper is organized as follows. We first discuss an algorithm for the case when both Alice and Bob share the knowledge of $L$ in Section~\ref{sec:knownL}. We present the analysis for failure probability, correctness, termination and number of bits sent by this algorithm in Section~\ref{sec:analysisKnownL}. Then, we remove the assumption of knowledge of $L$ and provide an algorithm for the unknown $L$ case in Section~\ref{sec:unknownL}, along with its analysis. Finally, in Section~\ref{sec:conclusion}, we conclude the paper by stating the main result and discuss some open problems.
	
	\section{Known $L$}
	\label{sec:knownL}
	We first discuss the case when Bob knows $L$.  We remove this assumption later in Section~\ref{sec:unknownL}.

	Our algorithm makes critical use of Reed-Solomon codes from \cite{reedSolomon}. Alice begins by encoding her message using a polynomial of degree $d = \lceil L/\log q \rceil-1$ over $GF(q)$, where $q = 2^{\lceil \log L \rceil}$. She sends the values of this polynomial computed at certain elements of the field as message symbols to Bob. Upon receiving an appropriate number of these points, Bob computes the polynomial using the Berlekamp-Welch algorithm \cite{welch1986error} and sends a fingerprint of his guess to Alice. Upon hearing this fingerprint, if Alice finds no errors, she echoes the fingerprint back to Bob, upon receiving a correct copy of which, Bob terminates the algorithm. Unless the adversary corrupts many bits, Alice terminates soon after.
	
	\par However, in the case where Alice does not receive a correct fingerprint of the polynomial from Bob, she sends two more evaluations of the polynomial to Bob. Bob keeps receiving extra evaluations and recomputing the polynomial until he receives the correct fingerprint echo from Alice.
	
	\subsection{Notation}
	Some helper functions and notation used in our algorithm are described in this section. We denote by $s \in_{\text{u.a.r.}}S$ the fact that $s$ is sampled uniformly at random from the set $S$.
	\paragraph{Fingerprinting} For fingerprinting, we use a well known theorem by Naor and Naor \cite{naor1993small}, slightly reworded as follows:
	
	\begin{mythm}
	\label{thm:hash}
		\cite{naor1993small} Fix integer $\ell > 0$ and real $p \in (0,1)$. Then there exist constants $C_s, \naorconst > 0$ and algorithm \texttt{h} such that the following hold for a given string $s \in_{\text{u.a.r.}} \{0,1\}^{C_s \log (\ell / p)}$.
		\begin{enumerate}
			\item  For a string $m$ of length at most $\ell$, we have $\fp{s}{m}{p}{\ell} = (s,f)$, where $f$ is a string of length $C_h \log (1/p)$.
			\item For any bit strings $m$ and $m'$ of length at most $\ell$, if $m = m'$, then $\fp{s}{m}{p}{\ell} = \fp{s}{m'}{p}{\ell}$, else $\Pr \{ \fp{s}{m}{p}{\ell} = \fp{s}{m'}{p}{\ell} \} \leq p$. 
		\end{enumerate}
	\end{mythm}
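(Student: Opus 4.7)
My plan is to build the fingerprint function \texttt{h} directly from Naor and Naor's construction of $\epsilon$-biased sample spaces, which is the standard route for simultaneously obtaining the short seed and the short output demanded by the theorem.

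I would first fix the output length to be $k = \lceil C_h \log(1/p) \rceil$ bits, for a constant $C_h$ to be pinned down at the end. The seed $s$ would then be fed into a Naor--Naor $\epsilon$-biased generator on $\{0,1\}^{k\ell}$, producing $k$ pseudorandom vectors $v_1, \ldots, v_k \in \{0,1\}^\ell$, where $\epsilon$ is a parameter chosen below. The hash would return $(s, f)$ with $f = (\langle v_1, m\rangle, \ldots, \langle v_k, m\rangle)$, each inner product taken modulo $2$; any input shorter than $\ell$ is implicitly zero-padded. Property~(1) is then immediate from the definition: the output length is exactly $C_h \log(1/p)$, and identical inputs yield identical outputs.

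For property~(2), fix distinct strings $m$ and $m'$ of length at most $\ell$ and set $z = m \oplus m' \neq 0$. The two hashes agree precisely when $\langle v_i, z\rangle = 0 \bmod 2$ for every $i$. By Vazirani's XOR lemma applied to the $\epsilon$-biased joint distribution on $\{0,1\}^{k\ell}$, the $k$-bit string $(\langle v_1, z\rangle, \ldots, \langle v_k, z\rangle)$ is within statistical distance $\epsilon \cdot 2^{k/2}$ of the uniform distribution on $\{0,1\}^k$, so the collision probability is bounded by $2^{-k} + \epsilon \cdot 2^{k/2}$. Taking $C_h$ large enough that $2^{-k} \leq p/2$ and setting $\epsilon = p \cdot 2^{-k/2 - 1}$ forces this bound to be at most $p$.

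The only delicate point is matching the seed-length claim. The Naor--Naor construction gives an $\epsilon$-biased distribution on $\{0,1\}^N$ using a seed of length $O(\log(N/\epsilon))$. Substituting $N = k\ell$ with $k = O(\log(1/p))$ and $\epsilon = \Theta(p \cdot 2^{-k/2})$, this evaluates to $O(\log \ell + \log k + \log(1/p)) = O(\log(\ell/p))$, so some constant $C_s$ suffices. I expect the main obstacle to be bookkeeping these three constraints --- on $k$, on $\epsilon$, and on the seed length --- so that the constants $C_s$ and $C_h$ can be chosen consistently; the substantive content beyond this is the XOR-lemma bound on collision probability, which is well established and the reason the Naor--Naor construction is cited here in the first place.
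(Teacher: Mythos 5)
This theorem is not proved in the paper at all: it is imported verbatim (``slightly reworded'') from Naor and Naor, so there is no internal proof to compare against. Your reconstruction via an $\epsilon$-biased generator on $\{0,1\}^{k\ell}$, inner products with the difference string $z = m \oplus m'$, and Vazirani's XOR lemma is the standard argument behind the cited result, and the parameter bookkeeping checks out: with $k = O(\log(1/p))$ and $\epsilon = \Theta(p\,2^{-k/2})$, the Naor--Naor seed length $O(\log(k\ell/\epsilon))$ collapses to $O(\log(\ell/p))$, and the collision bound $2^{-k} + \epsilon\,2^{k/2} \leq p$ follows as you say.

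One step would genuinely fail as written. The theorem quantifies over strings $m \neq m'$ \emph{of length at most} $\ell$, not of equal length, and your implicit zero-padding is not injective on such pairs: for $m = 1$ and $m' = 10$, both pad to the same vector, so $z = 0$, every inner product vanishes deterministically, and the fingerprints collide with probability $1$ rather than at most $p$. You need an injective padding (for example, map $m$ to $m\,1\,0^{\ell - |m|}$ inside $\{0,1\}^{\ell+1}$, or fold $|m|$ into the hashed string); this is why the function signature in the paper carries the length bound $\ell$ as an explicit argument. With that one-line repair the argument is complete.
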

	
	We refer to $\fp{s}{m}{p}{\ell}$ as the \emph{fingerprint} of the message $m$.
	
	\paragraph{GetPolynomial} Let $\mathcal{M}$ be a multiset of tuples of the form $(x,y) \in GF(q) \times GF(q)$. For each $x \in GF(q)$, we define $\texttt{maj}(\mathcal{M})(x)$ to be the tuple $(x,z)$ that has the highest number of occurrences in $\mathcal{M}$, breaking ties arbitrarily. We define $\texttt{maj}(\mathcal{M}) = \bigcup_{x \in GF(q)}\{ (x,\texttt{maj}(\mathcal{M})(x)) \}$. Given the set $\mathcal{S} = \texttt{maj}(\mathcal{M})$, we define $\getpoly{\mathcal{S}}{d}{q}$ as a function that returns the degree-$d$ polynomial over $GF(q)$ that is supported by the largest number of points in $\mathcal{S}$, breaking ties arbitrarily.\\
	
	\par The following theorem from~\cite{reedSolomon}~\cite{welch1986error} provides conditions under which $\getpoly{\mathcal{S}}{d}{q}$ reconstructs the required polynomial.
	\begin{mythm}
		\label{thm:reedSolomon}
		\cite{reedSolomon}~\cite{welch1986error} Let $P$ be a polynomial of degree $d$ over some field $\mathbb{F}$, and $\mathcal{S} \subset \mathbb{F} \times \mathbb{F}$. Let $g$ be the number of elements $(x,y) \in \mathcal{S}$ such that $y = P(x)$, and let $b = |\mathcal{S}| - g$. Then, if $g > b+d$, we have $\getpoly{\mathcal{S}}{d}{q} = P$. 
	\end{mythm}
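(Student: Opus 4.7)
The plan is to show that under the hypothesis $g > b + d$, the polynomial $P$ is uniquely supported by the largest number of points in $\mathcal{S}$, so that $\getpoly{\mathcal{S}}{d}{q}$ must return $P$. First I would note that $P$ itself is a degree-$d$ polynomial supported by $g$ points of $\mathcal{S}$, so the polynomial $Q$ returned by $\getpoly{\mathcal{S}}{d}{q}$ is supported by at least $g$ points. It then suffices to prove that any degree-$d$ polynomial $Q \neq P$ can be supported by strictly fewer than $g$ points of $\mathcal{S}$; this will force $Q = P$.

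To carry this out, I would first observe the crucial structural property of $\mathcal{S}$: by construction as $\texttt{maj}(\mathcal{M})$, each $x \in GF(q)$ appears as the first coordinate of at most one pair in $\mathcal{S}$. Hence for any polynomial $R$, the set of pairs in $\mathcal{S}$ supporting $R$ is in bijection with a subset of $\{x : (x, R(x)) \in \mathcal{S}\}$. Next, I would invoke the standard fact that two distinct polynomials of degree at most $d$ over a field can agree on at most $d$ points.

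Now I would do the main counting step. Let $A = \{(x,y) \in \mathcal{S} : y = P(x)\}$ and $B = \{(x,y) \in \mathcal{S} : y = Q(x)\}$, so $|A| = g$ and $|B| \geq g$ by the choice of $Q$. Because each $x$ appears once in $\mathcal{S}$, a pair lies in $A \cap B$ only if $P(x) = Q(x)$ at that $x$, so $|A \cap B| \leq d$. Combining
\[
|A \cup B| = |A| + |B| - |A \cap B| \geq g + |B| - d
\]
with $|A \cup B| \leq |\mathcal{S}| = g + b$, I would derive $|B| \leq b + d < g$, contradicting $|B| \geq g$.

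I do not expect a serious obstacle here; the argument is the textbook Reed--Solomon unique-decoding bound adapted to this setting. The only subtle point is the bookkeeping that ensures $\mathcal{S}$ has a single $y$-value per $x$, which is what lets me bound $|A \cap B|$ by the number of agreements of $P$ and $Q$ rather than by a larger multiplicity, and I would make that dependence on the definition of $\texttt{maj}(\mathcal{M})$ explicit at the start of the proof.
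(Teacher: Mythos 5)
The paper offers no proof of this theorem; it is imported as a known result from the cited Reed--Solomon and Berlekamp--Welch references, so there is nothing internal to compare against. Your argument is the standard unique-decoding counting bound and it is correct: $P$ is supported by $g$ points, any competing degree-$d$ polynomial $Q \neq P$ is supported by at most $|A \cap B| + |\mathcal{S} \setminus A| \leq d + b < g$ points, so the maximizer must be $P$. One small remark: the single-$y$-per-$x$ property of $\texttt{maj}(\mathcal{M})$ that you flag as the subtle point is not actually needed for the bound $|A \cap B| \leq d$ --- a pair in $A \cap B$ already forces $y = P(x) = Q(x)$, and since $\mathcal{S}$ is a set it contains at most one pair of the form $\left(x, P(x)\right)$ for each of the at most $d$ points of agreement, so the theorem holds for arbitrary $\mathcal{S} \subset \mathbb{F} \times \mathbb{F}$ as stated.
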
 
		
	\paragraph{Algebraic Manipulation Detection Codes} Our algorithm also makes use of Algebraic Manipulation Detection (AMD) codes from \cite{amd}. For a given $\eta > 0$, called the strength of AMD encoding, these codes provide three functions: $\texttt{amdEnc}$, $\texttt{amdDec}$ and $\texttt{IsCodeword}$. The function $\amdenc{m}{\eta}$ creates an AMD encoding of a message $m$. The function $\iscodeword{m}{\eta}$ takes a message $m$ and returns true if and only if there exists some message $m'$ such that $\amdenc{m'}{\eta} = m$. The function $\amddec{m}{\eta}$ takes a message $m$ such that $\iscodeword{m}{\eta}$ and returns a message $m'$ such that $\amdenc{m'}{\eta} = m$. These functions enable detection of bit corruption in an encoded message with high probability. The following (slightly reworded) theorem from \cite{amd} helps establish this:
	
	\begin{mythm}
	\label{thm:amd}
		\cite{amd} For any $\eta > 0$, there exist functions $\texttt{amdEnc}$, $\texttt{amdDec}$ and $\texttt{IsCodeword}$, such that for any bit string $m$ of length $x$:
		\begin{enumerate}
			\item $\amdenc{m}{\eta}$ is a string of length $x+C_a \log (1/\eta)$, for some constant $C_a > 0$
			\item $\iscodeword{\amdenc{m}{\eta}}{\eta}$ and $\amddec{\amdenc{m}{\eta}}{\eta}=m$
			\item For any bit string $s\neq 0$ of length $x$, we have \[ \Pr \left( \iscodeword{\amdenc{m}{\eta} \oplus s}{\eta} \right) \leq \eta \]
		\end{enumerate}
	\end{mythm}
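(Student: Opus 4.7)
The plan is to exhibit a specific construction of the functions $\texttt{amdEnc}$, $\texttt{amdDec}$, and $\texttt{IsCodeword}$ and verify the three claimed properties for it. I would use a systematic AMD code based on polynomial evaluation over a finite field, in the spirit of Cramer et al.'s construction. The intuition is to append a short random ``tag'' to the message that encodes an algebraic consistency check: any tampering forces the adversary to predict the value of a nontrivial polynomial at a hidden random field element, which it can do with only small probability.

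Concretely, I would pick a power of two $q$ with $q \geq (x+2)/\eta$ and $\log q = O(\log(1/\eta))$ in the parameter regime of interest (absorbing any additional $\log x$ overhead into the constant $C_a$). View the message as $m = (m_1,\ldots,m_k) \in GF(q)^k$, where $k = \lceil x/\log q \rceil$. Sample $r \in_{\text{u.a.r.}} GF(q)$ and define
\[
\amdenc{m}{\eta} = \bigl(m_1,\ldots,m_k,\; r,\; t\bigr), \quad \text{where } t = r^{k+2} + \sum_{i=1}^{k} m_i \, r^i.
\]
The bit-level representation simply concatenates binary encodings of each element of $GF(q) \cong GF(2)^{\log q}$, so bitwise XOR on the codeword corresponds to field addition coordinatewise. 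Define $\amddec{\cdot}{\eta}$ to return the first $k$ field coordinates, and $\iscodeword{\cdot}{\eta}$ to test the above algebraic identity on the parsed triple. Property (1) is immediate from the length calculation $(k+2)\log q = x + O(\log(1/\eta))$, and property (2) holds by construction.

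The heart of the proof, and the step I expect to require the most care, is property (3). Suppose the adversary XORs in an error $s \neq 0$ whose field representation is $(\Delta m, \Delta r, \Delta t)$. The tampered triple lies in the code if and only if
\[
(r+\Delta r)^{k+2} + \sum_{i=1}^{k}(m_i + \Delta m_i)(r + \Delta r)^i \;=\; t + \Delta t.
\]
Substituting $t = r^{k+2} + \sum_i m_i r^i$ and moving everything to one side yields $P(r) = 0$, a univariate polynomial equation of degree at most $k+1$ in $r$ whose coefficients depend only on $m$ and on the adversary's error $s$, but not on $r$. The key technical lemma is that $P$ is not the zero polynomial whenever $s \neq 0$; this follows from a short case analysis on which coordinate of the error is nonzero (the subcases $\Delta r \neq 0$, $\Delta r = 0$ with some $\Delta m_i \neq 0$, and $\Delta m = 0 = \Delta r$ with $\Delta t \neq 0$ each force a different nonvanishing coefficient of $P$). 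Given this non-triviality, and since $r$ is uniform over $GF(q)$ and independent of the adversary's view,
\[
\Pr\bigl[\iscodeword{\amdenc{m}{\eta} \oplus s}{\eta}\bigr] \;\leq\; \frac{\deg P}{q} \;\leq\; \frac{k+1}{q} \;\leq\; \eta.
\]

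Thus the whole argument collapses to the single-variable Schwartz--Zippel bound, once the non-triviality of $P$ is established. The main obstacle is precisely this case analysis certifying that $P \not\equiv 0$ for every nonzero $s$; everything else is bookkeeping about field sizes and the translation between the bit-string model of the channel and the algebraic model underlying the code.
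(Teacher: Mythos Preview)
The paper does not give its own proof of this theorem: it is stated as a (slightly reworded) citation of the AMD-code result of Cramer et al.\ \cite{amd}, and is used as a black box. Your proposal therefore goes beyond what the paper does, supplying the standard polynomial-evaluation construction from that reference. The approach is the right one and matches the original source, so in that sense your proof is ``the same'' as what the paper implicitly invokes.

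Two small technical caveats are worth flagging. First, since you work over $GF(2^{\log q})$, binomial coefficients are taken modulo~2, and the coefficient of $r^{k+1}$ in $(r+\Delta r)^{k+2}-r^{k+2}$ is $(k+2)\Delta r$, which vanishes when $k$ is even; the Cramer et al.\ construction handles this by requiring the top exponent to be coprime to the characteristic (here, odd), which you should enforce, e.g.\ by padding $m$ with one extra zero symbol when needed. Second, your overhead is $2\log q$ with $q \gtrsim k/\eta$, hence $O(\log(x/\eta))$ rather than $O(\log(1/\eta))$; your remark about ``absorbing the $\log x$'' into $C_a$ only works when $x$ is polynomially bounded in $1/\eta$, which happens to hold in every invocation the paper makes but is not literally what the theorem statement asserts. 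Neither point affects the paper's downstream use of the theorem.
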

	
	With the use of Naor-Naor hash functions along with AMD codes, we are able to provide the required security for messages with Alice and Bob. Assume that the Bob generates the fingerprint $(s,f)$, which upon tampering by the adversary, is converted to $(s \oplus t_1, f \oplus t_2)$ for some strings $t_1,t_2$ of appropriate lengths. Upon receiving this, Alice compares it against the fingerprint of her message $m$ by computing $\fp{s \oplus t_1}{m}{p}{|m|}$, for appropriately chosen $p$. Then, we require that there exist a $\eta \geq 0$ such that for any choice of $t_1,t_2$, 
	\begin{equation*}
		\Pr \{ \fp{s \oplus t_1}{m'}{p}{|m'|} = (s \oplus t_1, f \oplus t_2) \} \leq \eta
	\end{equation*} 
	for any string $m' \neq m$. Theorem~\ref{thm:amd} provides us with this guarantee directly.
				
	\paragraph{Error-correcting Codes} These codes enable us to encode a message so that it can be recovered even if the adversary corrupts a third of the bits. We will denote the encoding and decoding functions by \texttt{ecEnc} and \texttt{ecDec}, respectively. The following theorem, a slight restatement from~\cite{reedSolomon}, gives the properties of these functions.
	
	\begin{mythm}
		\label{thm:ecc}
		\cite{reedSolomon} There is a constant $\eccconst > 0$ such that for any message $m$, we have $|\ecc{m}| \leq \eccconst |m|$. Moreover, if $m'$ differs from $\ecc{m}$ in at most one-third of its bits, then $\ecci{m'}=m$.
	\end{mythm}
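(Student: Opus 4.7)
The plan is to construct \texttt{ecEnc} by concatenating a Reed--Solomon outer code over $GF(q)$ with a binary inner code of high relative distance. For a message $m$ of length $L$, I pick $q = 2^b$ with $b$ a suitable constant, pack $m$ into $k = L/b$ field elements, interpret them as the coefficients of a polynomial $P$ of degree $k-1$, and evaluate $P$ at $n = k/R_1$ distinct points of $GF(q)$ (possible since we may choose $q \ge n$). Each outer symbol is then replaced by its image under an inner binary code of rate $R_2$, block length $n_2 := b/R_2$, and relative distance $\delta_2$ as close to $1/2$ as desired, obtainable for instance via Gilbert--Varshamov or an explicit Justesen-style construction. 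The resulting binary string $\ecc{m}$ has length $n n_2 = L/(R_1 R_2)$, giving the first claim with $\eccconst = 1/(R_1 R_2)$.

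For $\ecci{m'}$, I would split $m'$ into $n$ inner blocks, decode each to its nearest inner codeword to obtain $n$ tentative outer symbols, and then feed these to the Berlekamp--Welch algorithm to recover $P$, and hence $m$. Correctness is a two-level counting argument: if $m'$ and $\ecc{m}$ differ in at most a $1/3$ fraction of the $n n_2$ bits, then the number of inner blocks decoded incorrectly --- which requires more than $\delta_2 n_2 / 2$ bit errors in that block --- is at most a $2/(3\delta_2)$ fraction of the $n$ outer blocks; all remaining outer symbols are recovered exactly, and Berlekamp--Welch (Theorem~\ref{thm:reedSolomon}) succeeds provided this fraction is at most $(1 - R_1)/2$.

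The step I expect to be the main obstacle is making this two-level argument actually absorb a full $1/3$ fraction of bit errors. The Plotkin bound prohibits binary inner codes with $\delta_2 \ge 1/2$, so naive unique decoding of both layers saturates near $1/4$ and falls short of $1/3$. To close the gap I would either switch to Forney's generalized minimum-distance decoding, which uses soft reliability information from the inner decoder to roughly double the effective correction radius, or invoke Guruswami--Sudan list decoding of the outer code together with an AMD-code sanity check (Theorem~\ref{thm:amd}) that disambiguates the returned list; in both cases the rates $R_1$ and $R_2$ are tuned small enough that the asymptotic bit-error radius comfortably exceeds $1/3$. With this refinement the remaining arithmetic --- the two pigeonhole inequalities and the Berlekamp--Welch guarantee --- is routine, and the theorem follows.
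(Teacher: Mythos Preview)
The paper does not prove this theorem at all: it is stated as a black-box fact with a citation to~\cite{reedSolomon}, and no argument is given. So there is no ``paper's proof'' to compare against, and your sketch already goes well beyond what the authors supply.

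That said, your own diagnosis of the obstacle is correct, and in fact fatal to the statement as written. The Plotkin bound forbids any constant-rate binary code from having relative distance exceeding $1/2$, so no deterministic decoder can uniquely recover from a $1/3$ fraction of bit flips: two codewords could lie within Hamming radius $n/3$ of the same string. Neither of your proposed repairs closes this gap. Forney's GMD decoding of a concatenated code corrects up to a $\delta_1\delta_2/2$ fraction of errors, and since $\delta_2<1/2$ for any positive-rate binary inner code this caps strictly below $1/4$, not $1/3$. Your second route, list decoding the outer code and disambiguating with an AMD check, does reach radius beyond $1/4$, but it changes the theorem: the decoder now succeeds only with probability $1-\eta$ rather than always, and the statement promises a deterministic guarantee.

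The resolution is that the constant $1/3$ in the paper is cosmetic. Everywhere it is used (the silence threshold, the ``adversary must flip a third of the bits'' arguments in Lemma~\ref{lem:corr}, etc.) any fixed positive constant would do, affecting only the constants hidden in the $O(\cdot)$ bounds. Your concatenated construction with ordinary unique decoding already yields a true version of the theorem with $1/3$ replaced by any constant below $1/4$, and that is all the rest of the paper needs. So your approach is sound; it is the theorem's stated constant that is an overstatement.
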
 
	
	Finally, we observe that the linearity of \texttt{ecEnc} and \texttt{ecDec} ensure that when the error correction is composed with the AMD code, the resulting code has the following properties:
	\begin{enumerate}
		\item If at most a third of the bits of the message are flipped, then the original message can be uniquely reconstructed by rounding to the nearest codeword in the range of \texttt{ecEnc}.
		\item Even if an arbitrary set of bits is flipped, the probability of the change not being recognized is at most $\eta$, \ie\ the same guarantee as the AMD codes.
	\end{enumerate}  
	This is because \texttt{ecDec} is linear, so when noise $\eta$ is added by the adversary to the codeword $x$, effectively what happens is the decoding function $\ecci{x+\eta} = \ecci{x} + \ecci{\eta} = m + \ecci{\eta}$, where $m$ is the AMD-encoded message. But now $\ecci{\eta}$ is an random string that is added to the AMD-encoded codeword.

\paragraph{Silence} In our algorithm, silence on the channel has a very specific meaning. We define the function $\silence{s}$ to return true iff the string $s$ has fewer than $|s|/3$ bit alternations. 

\paragraph{Other notation} We use $\zero{b}$ to denote the $b$-bit string of all zeros, $\odot$ for string concatenation and $\listen{b}$ to denote the function that returns the bits on the channel over the next $b$ time steps. For the sake of convenience, we will use $\log x$ to mean $\lceil \log_2 x \rceil$, unless specified otherwise. Let $\bjconst = \max \{ 19,\naorconst+\amdconst+\eccconst C_s \}$. 

\subsection{Algorithm overview} 
	\begin{algorithm*}[t]
	\caption{Alice's algorithm}
	\label{aliceAlgo}
	\begin{algorithmic}[1]
	\Procedure{Alice}{$M, \delta$}   \Comment{$M$ is a message of length $L$}
		\State $q \gets 2^{\lceil \log L \rceil}$
		\Comment Field size
		\State $d \gets \lceil L / \log q \rceil -1$
		\Comment Degree of polynomial
		\State $P_a \gets \ $degree-$d$ polynomial encoding of $M$ over $GF(q)$
		\State Send $\{ P(0),P(1),\dots,P(d) \}$ 
		\For{$j = 1 \ \text{to }\infty$}
		\Comment{Rounds for the algorithm}
			\State $\eta_j \gets (1/2)^{\lfloor j/d \rfloor}\delta/6d$
			\State $b_j \gets \bjconst \log \left( L/ \eta_j \right)$
			\Comment Message size in this round
			\State $f \gets \ecci{\listen{b_j}}$
			\Comment Fingerprint from Bob
			\If{$\iscodeword{f}{\eta_j}$}
				\State $(s,f_1) \gets \amddec{f}{\eta_j}$
				\If{$(s,f_1) = \fp{s}{P_a}{\eta_j}{L}$}
				\State Send $\ecc{f}$
				\Comment Echo the fingerprint
				\EndIf
			\EndIf
			\State Send $\zero{b_j}$ if the fingerprint was not echoed.
			\State $f_2 \gets \listen{b_j}$
			\If{$\silence{f_2}$}
				\State $\terminate$
				\Comment Bob has likely left
			\Else
				\State $M_a \gets \ $\balls\ of $P_a$ at next two points of the field (cyclically)
				\State Send $\ecc{\amdenc{M_a}{\eta_j}}$
			\EndIf
		\EndFor
   	\EndProcedure
	\end{algorithmic}
	\end{algorithm*}
	
	\begin{algorithm*}[t]
	\caption{Bob's algorithm}
	\label{bobAlgo}
	\begin{algorithmic}[1]
	\Procedure{Bob}{$L, \delta$} 
		\State $q \gets 2^{\lceil \log L \rceil}$
		\Comment Field size
		\State $d \gets \lceil L / \log q \rceil -1$
		\Comment Degree of polynomial
		\State $\mathcal{B} \gets \emptyset$
		\Comment $\mathcal{B} \in GF(q) \times GF(q)$
		\State Listen to first $d+1$ evaluations from Alice 
		\State Add the corresponding \balls\ to $\mathcal{B}$
		\For{$j = 1 \ \text{to }\infty$}
			\State $\eta_j \gets (1/2)^{\lfloor j/d \rfloor}\delta/6d$
			\State $b_j \gets \bjconst \log \left( L/ \eta_j \right)$
			\Comment Message size in this round
			\State $P_b \gets \getpoly{\texttt{maj}(\mathcal{B})}{d}{q}$
			\State Sample a string $s \in_{\text{u.a.r.}} \{ 0,1\}^{C_s b_j/C}$
			\State $f_b \gets \amdenc{\fp{s}{P_b}{\eta_j}{L}}{\eta_j}$
			\State Send $\ecc{f_b}$
			\Comment Send Alice the fingerprint of the polynomial
			\State $f_b' = \ecci{\listen{b_j}}$
			\Comment Listen to Alice's echo
			\If{$f_b' = f_b$}
				\State $\terminate$
			\Else
				\State Send a string $f_2' \in_{\text{u.a.r.}} \{0,1\}^{b_j}$
				\State Receive \balls\ for the next two \bins\ and add to $\mathcal{B}$ 
			\EndIf
		\EndFor
	\EndProcedure
	\end{algorithmic}
	\end{algorithm*}
	
	Our algorithm for the case when $L$ is known is given in two parts: Algorithm~\ref{aliceAlgo} is what Alice follows and Algorithm~\ref{bobAlgo} is what Bob follows. Both algorithms assume knowledge of the message length $L$ and the error tolerance $\delta$. The idea is for Alice to compute a degree-$d$ polynomial encoding of $M$ over a field of size $q$. Here $q = 2^{\lceil \log L \rceil}$ and $d = \lceil L / \log q \rceil -1$. She begins by sending evaluations of this polynomial over the first $d+1$ field elements to Bob in plaintext, which Bob uses to reconstruct the polynomial and retrieve the message. He also computes a fingerprint of this polynomial and sends it back to Alice. He encodes this fingerprint with AMD encoding and then ECC encoding, so that any successful tampering will require at least a third of the bits in the encoded fingerprint to be flipped and will be detected with high probability. If Alice receives a correct fingerprint, she echoes it back to Bob. Upon listening to this echo, Bob terminates. The channel from Bob to Alice is now silent, after incepting which Alice terminates the protocol as well. 
	
	If the adversary flips bits on the channel so that Bob's fingerprint mismatches, Alice recognizes this mismatch with high probability and exchanges more evaluations of her polynomial with Bob, proceeding in rounds. In each round, Alice sends two more evaluations of the polynomial on the next two field elements and sends them to Bob. Bob uses these to reconstruct his polynomial and sends a fingerprint back to Alice. The next round only begins if Alice did not terminate in this round, which will require this fingerprint to match and for Alice to intercept silence after Bob has terminated. We will bound the number of rounds and the failure probability for our algorithm in the next section.
	
	\subsection{Example Run}
	
	\begin{figure*}
		\includegraphics[width=500pt]{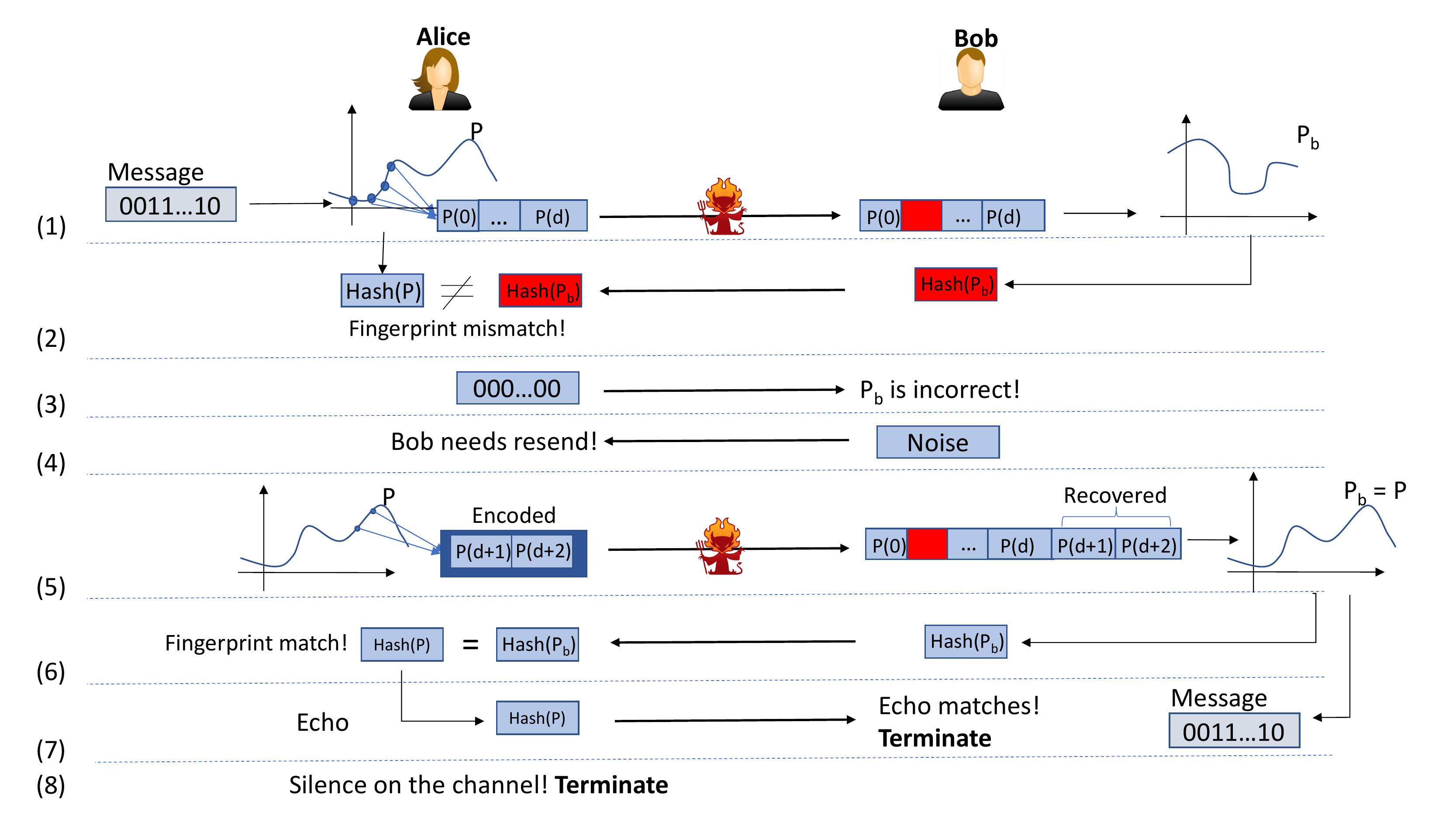}
		\caption{Example run of our protocol for the case when the adversary corrupts one polynomial evaluation tuple in plaintext and fewer than a third of the bits in the encoded tuples that are sent during the resend. The blue boxes represent bits from our protocol, red boxes represent bits flipped by the adversary, and the dar blue box emphasizes the fact that the contained bits are encoded using ECC and AMD codes.}
		\label{fig:exampleRun}
	\end{figure*}
	
	We now discuss an example of a run of our protocol to make the different steps in the algorithm more clear. We illustrate this example in Fig.~\ref{fig:exampleRun} and provide a step-by-step explanation below. 
	\begin{enumerate}
	\item Alice begins by computing a polynomial $P$ corresponding to the message and sends its evaluation on the first $d+1$ field elements to Bob, in plaintext. The adversary now corrupts one of the evaluation tuples so that the polynomial $P_b$ that Bob reconstructs is different than $P$.
	\item Bob computes the fingerprint of this polynomial, depicted $Hash(P_b)$ for brevity, and sends it to Alice. Alice compares this fingerprint against the hash of her own polynomial, $Hash(P)$, and notices a mismatch. 
	\item In response, Alice remains silent. Bob is now convinced that his version of the polynomial is incorrect, so he sends noise to Alice to ask her for a resend. 
	\item Alice encodes two more evaluations of $P$ at the next two field elements and sends them to Bob. The adversary tries to tamper with these evaluations by flipping some bits. For this example, we assume that he flips fewer than a third of the total number of bits in the encoded evaluations. Upon decoding, Bob is able to successfully recover both the evaluations and uses the $\texttt{GetPolynomial}$ subroutine to recompute $P_b$, which in this case matches $P$. 
	\item Bob computes $Hash(P_b)$ and sends it to Alice. Upon seeing this hash and verifying that it matches $Hash(P)$, Alice is now convinced that Bob has the correct copy of the polynomial, and hence, the original message.
	\item Alice echoes the hash back to Bob, upon hearing which Bob extracts the message from the polynomial (using its coefficients) and terminates the protocol. Silence follows on the channel from Bob to Alice.
	\item Alice intercepts silence and terminates the protocol as well.
	\end{enumerate} 
	The message has now successfully been transmitted from Alice to Bob.
	
	\section{Analysis}
	\label{sec:analysisKnownL}
	We now prove that our algorithm is correct with probability at least $1-\delta$, and compute the number of bits sent. Before proceeding to the proof, we define three bad events:
	\begin{enumerate}
		\item \emph{Unintentional Silence.} When Bob executes step 18 of his algorithm, the string received by Alice is interpreted as silence.
		\item \emph{Fingerprint Error.} Fingerprint hash collision as per Theorem~\ref{thm:hash}. 
		\item \emph{AMD Error.} The adversary corrupts an AMD encoded message into an encoding of a different message. 
	\end{enumerate}
	
	\paragraph{Rounds}
	For both Alice and Bob, we define a \emph{round} as one iteration of the \texttt{for} loop in our algorithm. We refer to the part of the algorithm before the \texttt{for} loop begins as \emph{round $0$}. The AMD encoding strength $\eta$ is equal to $\delta/6d$ initially and decreases by a factor of $2$ every $d$ rounds. This way, the number of bits added to the messages increases linearly every $d$ rounds, which enhances security against corruption. 
	
	\subsection{Correctness and Termination}
	We now prove that with probability at least $1-\delta$, Bob terminates the algorithm with the correct guess of Alice's message.
	
	\subsubsection{Unintentional Silence} The following lemmas show that Alice terminates before Bob with probability at most $\delta/3$.	
	
	\begin{lemma}
		\label{lem:chernoff}
		For $b\geq 71$, the probability that a $b$-bit string sampled uniformly at random from $\{0,1 \}^b$ has fewer than $b/3$ bit alternations is at most $e^{-b/19}$.
	\end{lemma}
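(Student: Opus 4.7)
The plan is to express the number of alternations as a sum of $b-1$ i.i.d.\ Bernoulli$(1/2)$ indicators, and then apply a sharp Chernoff tail bound; the main work will be verifying that the resulting exponent dominates $b/19$ once $b \geq 71$.

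First I would set up the distribution of the random variable. Write the random string as $b_1 b_2 \cdots b_b$ and let $A_i = \mathbf{1}[b_i \neq b_{i-1}]$ for $i = 2, \ldots, b$. The map $(b_1, b_2, \ldots, b_b) \mapsto (b_1, A_2, \ldots, A_b)$ is a bijection on $\{0,1\}^b$, so under the uniform distribution on bit strings, the random variables $b_1, A_2, \ldots, A_b$ are jointly uniform, hence i.i.d.\ Bernoulli$(1/2)$. Consequently, the number of alternations $X = \sum_{i=2}^b A_i$ is Binomial$(b-1, 1/2)$ with mean $(b-1)/2$.

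With this reduction in hand, I would apply the Chernoff--Cramer bound $\Pr[X \leq k] \leq \exp(-(b-1)\, D(k/(b-1) \,\|\, 1/2))$, where $D(\alpha \,\|\, p) = \alpha \ln(\alpha/p) + (1-\alpha)\ln((1-\alpha)/(1-p))$ is the binary KL divergence. Taking $k = \lfloor b/3 \rfloor$ (so that $X < b/3$ forces $X \leq k$), and noting that $D(1/3 \,\|\, 1/2) = (5/3)\ln 2 - \ln 3 \approx 0.0566$ strictly exceeds $1/19 \approx 0.0527$, the remaining step is the explicit numerical verification that $(b-1)\, D(\lfloor b/3 \rfloor / (b-1) \,\|\, 1/2) \geq b/19$ for every integer $b \geq 71$. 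This can be handled by checking the base case $b = 71$ directly and then noting that both sides are essentially linear in $b$, with the left-hand slope (converging to $0.0566$) strictly larger than the right-hand slope $1/19$.

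The main obstacle is the sharpness of the constants. Simpler bounds, such as Hoeffding's inequality, which yields $\Pr[X < b/3] \leq \exp(-(b-3)^2/(18(b-1)))$ via deviation $t = (b-3)/6$, and the multiplicative-form Chernoff bound $\exp(-\gamma^2 \mu/2)$, both fall short of $\exp(-b/19)$ near the threshold $b = 71$. So the proof really needs the full KL form of Chernoff together with a direct numerical check at the boundary; once the boundary case is verified, the linear-growth comparison makes the extension to all $b \geq 71$ routine.
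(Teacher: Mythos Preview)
Your argument is correct and differs from the paper's in a substantive way. Both proofs start identically, writing the number of alternations as a $\mathrm{Binomial}(b-1,\tfrac12)$ random variable via i.i.d.\ indicators. The paper then applies the Hoeffding-type bound
\[
\Pr\Bigl\{X < \tfrac{b-1}{2} - \tfrac{t\sqrt{b-1}}{2}\Bigr\} \le e^{-t^2/2},
\]
sets $t = (b-3)/(3\sqrt{b-1})$, obtains $\Pr\{X < b/3\} \le e^{-(b-3)^2/(18(b-1))}$, and asserts this is at most $e^{-b/19}$ for $b \ge 71$. You instead go to the sharper KL (Chernoff--Cram\'er) form of the tail bound and check the constants directly.

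Your instinct that the Hoeffding route is too loose near the threshold is in fact correct: the inequality $(b-3)^2/(18(b-1)) \ge b/19$ is equivalent to $b^2 - 96b + 171 \ge 0$, which \emph{fails} for every integer $2 \le b \le 94$ (at $b=71$ the left side is $-1604$). So the paper's last displayed inequality does not actually hold on the range $71 \le b \le 94$, and your use of the KL bound is not just an alternative but a repair. At $b=71$ the KL exponent is $70\,D\!\left(\tfrac{23}{70}\,\Big\|\,\tfrac12\right) \approx 4.20 > 71/19 \approx 3.74$, so the boundary check you outline goes through; together with your slope comparison ($D(\tfrac13\|\tfrac12)\approx 0.0566 > 1/19$) this closes the gap.
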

	\begin{proof}
		Let $s$ be a string sampled uniformly at random from $\{ 0,1\}^b$, where $b \geq 71$. Denote by $s[i]$ the $i^{th}$ bit of $s$. Let $X_i$ be the indicator random variable for the event that $s[i]\neq s[i+1]$, for $1 \leq i < b$. Note that all $X_i$'s are mutually independent. Let $X$ be the number of bit alternations in $s$. Clearly, $X = \sum_{i=1}^{b-1}X_i$, which gives $\mathbb{E}(X) = \sum_{i=1}^{b-1}\mathbb{E}(X_i)$, using the linearity of expectation. Since $\mathbb{E}(X_i)=1/2$ for all $1 \leq i < b$, we get $\mathbb{E}(X) = (b-1)/2$. Using the multiplicative version of Chernoff bounds~\cite{dubhashi2009concentration} for $0 \leq t \leq \sqrt{b-1}$,
		\begin{equation*}
			\Pr \left \{ X < \frac{b-1}{2} - \frac{t\sqrt{b-1}}{2} \right \} \leq e^{-t^2 /2}.
		\end{equation*}
		To obtain $\Pr \{ X < b/3 \}$, set $t = \frac{b-3}{3\sqrt{b-1}}$ to get,
		\begin{equation*}
		\Pr \{ X < b/3 \} \leq e^{-\frac{(b-3)^2}{18(b-1)}} \leq e^{- b/19} \quad \text{for $b \geq 71$}.
		\end{equation*}
	\end{proof} 

	\begin{lemma}
	\label{lem:term}
   	Alice terminates the algorithm before Bob with probability at most $\delta/3$.
	\end{lemma}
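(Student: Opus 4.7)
\textbf{Proof plan for Lemma~\ref{lem:term}.}
My plan is to identify the single mechanism by which Alice can terminate, show that the per-round failure probability is exponentially small in the string length $b_j$, and union-bound over rounds. Alice's only termination point is Algorithm~\ref{aliceAlgo} step 18, triggered by the silence test on the $b_j$-bit string $f_2$ received at step 17. For Alice to terminate before Bob in some round $j^{\star}$, Bob must not yet have terminated at his step 16 check, and hence has executed his step 18 and sent a fresh uniformly random $b_{j^{\star}}$-bit string $f_2'$. So the bad event in round $j$ is: ``$f_2'$ is uniform on $\{0,1\}^{b_j}$ and the received $f_2$ has fewer than $b_j/3$ bit alternations.''

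First I will argue the string $f_2$ that reaches Alice is still uniform on $\{0,1\}^{b_j}$. Since the adversary never observes Bob's private random bits and cannot read the channel except through predictable information, its bit-flip pattern in round $j$ is a function of quantities independent of $f_2'$; XORing an independent pattern with a uniform string preserves uniformity. Next I check that $b_j = C\lceil \log_2(L/\eta_j)\rceil \geq 71$ for all $j\geq 1$: using $C \geq 19$ and $L/\eta_j \geq 6Ld/\delta \geq 12$ under the standing assumptions, we have $b_j \geq 19\cdot 4 = 76$. Lemma~\ref{lem:chernoff} then yields $\Pr\{\silence{f_2}\} \leq e^{-b_j/19}$ in any fixed round $j$.

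Finally I will union-bound over $j \geq 1$. Since $b_j/19 \geq \log_2(L/\eta_j)$ and $e^{-\log_2 x} \leq 1/x$ (because $1/\ln 2 > 1$), each term satisfies $e^{-b_j/19} \leq \eta_j/L$. The schedule $\eta_j = 2^{-\lfloor j/d\rfloor}\delta/(6d)$ is geometric, giving
\[
\sum_{j=1}^{\infty} \frac{\eta_j}{L} \;=\; \frac{\delta}{6 d L}\sum_{j=1}^{\infty} 2^{-\lfloor j/d\rfloor} \;\leq\; \frac{(2d-1)\,\delta}{6 d L} \;\leq\; \frac{\delta}{3L} \;\leq\; \frac{\delta}{3}.
\]

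The main obstacle is the independence argument in the first step: even though the adversary is adaptive and may glean information about Bob's earlier randomness through Alice's subsequent transmissions, one has to verify that the string Bob samples at his step 18 of round $j$ is fresh and independent of whichever flipping pattern the adversary commits to that round. Once this is pinned down, everything else reduces to checking that the schedules for $\eta_j$ and the constant $C \geq 19$ were chosen precisely so the per-round silence probability is summable to $\delta/3$.
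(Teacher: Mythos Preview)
Your proposal is correct and follows essentially the same route as the paper's proof: identify that Alice can only terminate via the silence test on Bob's step-18 string, apply Lemma~\ref{lem:chernoff} to bound the per-round probability by $e^{-b_j/19}\leq \eta_j/L$, and sum the resulting geometric series to $\delta/3$. Your treatment is in fact slightly more careful than the paper's, which applies Lemma~\ref{lem:chernoff} directly without arguing that the \emph{received} string $f_2$ is uniform; your XOR-with-an-independent-pattern observation makes this explicit, and your verification that $b_j\geq 71$ is a cleaner variant of the paper's check.
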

	\begin{proof}
	Let $\xi$ be the event that Alice terminates before Bob. This happens when the string sent by Bob in step 18 after possible adversarial corruptions is interpreted as silence by Alice. Let $\xi_j$ be the event that Alice terminates before Bob in round $j$ of the algorithm. Then, using a union bound over the rounds, the fact that $C \geq 19$ and Lemma~\ref{lem:chernoff}, we get 
	\begin{equation*}
	\begin{split}
		\Pr \{\xi \} &\leq \sum_{j\geq 1} \Pr \{ \xi_j \} \leq \sum_{j \geq 1}e^{-b_j/19} \leq \sum_{j \geq 1}2^{-b_j/19} \\
		&= \sum_{j \geq 1}2^{-C\log (L/\eta_j)/19} \leq \sum_{j \geq 1}2^{-\log (L/\eta_j)} = \sum_{j \geq 1}\log (\eta_j /L) \\
		&\leq \frac{\delta}{6Ld} \sum_{j \geq 0} \left( \frac{1}{2} \right)^{\lfloor j/d \rfloor} \leq \frac{\delta}{3L} \leq \frac{\delta}{3}
	\end{split}
	\end{equation*}
	Note that Lemma~\ref{lem:chernoff} is applicable here because for each $j \geq 1$, we have $b_j \geq 71$. To see this, use the fact that $d \leq 2L/\log L$ and $\delta < 1$ to obtain the condition $L^2 \geq 2^{71/C}/12$, which is always true because $L^2 > 4 > 2^{71/C}/12$.
	\end{proof}
	
	\subsubsection{Fingerprint Failure} The following lemma proves that the fingerprint error happens with probability at most $\delta/3$, ensuring the correctness of the algorithm.
	
	\begin{lemma}
	\label{lem:correctness}
	Upon termination, Bob does not have the correct guess of Alice's message with probability at most $\delta/3$.	
	\end{lemma}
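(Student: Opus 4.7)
The plan is to bound the probability that Bob terminates with a reconstructed polynomial $P_b$ different from Alice's polynomial $P_a$ by showing that, outside the AMD Error bad event (which is bounded separately), the only way this can happen in any round is through a Naor--Naor fingerprint hash collision, and then union-bounding this collision probability over all rounds using the schedule $\eta_j = (1/2)^{\lfloor j/d \rfloor}\delta/(6d)$.

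I would first fix a round $j \geq 1$ and observe that if Bob terminates in this round then, after ECC-decoding Alice's echo, he obtains $f_b' = f_b$, where $f_b = \amdenc{\fp{s}{P_b}{\eta_j}{L}}{\eta_j}$ and $s$ is a fresh seed drawn by Bob at step~11 of Algorithm~\ref{bobAlgo}. If $P_b = P_a$ there is nothing to prove, so I would assume $P_b \neq P_a$. Alice echoes only when she successfully ECC+AMD-decodes what she receives to some codeword and the decoded pair equals $\fp{s'}{P_a}{\eta_j}{L}$. Conditioning on the complement of the AMD Error event in round $j$, Alice's decoded AMD codeword must in fact be $f_b$ itself (so $s' = s$), reducing her acceptance to the hash equality $\fp{s}{P_b}{\eta_j}{L} = \fp{s}{P_a}{\eta_j}{L}$ with $P_a \neq P_b$.

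By Theorem~\ref{thm:hash}, since $s$ is sampled uniformly at random and, by the adversary's limitations (it does not see private random bits of Alice and Bob), is independent of the adversary's choices up through round $j$, this collision occurs with probability at most $\eta_j$. A union bound over all rounds, grouping terms by $k = \lfloor j/d \rfloor$ (where $k = 0$ contributes at most $d-1$ values of $j$ and each $k \geq 1$ contributes $d$ values), gives
\begin{equation*}
\sum_{j \geq 1} \eta_j \;=\; \frac{\delta}{6d}\sum_{j\geq 1}\left(\frac{1}{2}\right)^{\lfloor j/d\rfloor} \;\leq\; \frac{\delta}{6d}\left((d-1) + d\sum_{k\geq 1}\frac{1}{2^k}\right) \;\leq\; \frac{\delta}{6d}\cdot 2d \;=\; \frac{\delta}{3}.
\end{equation*}

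The main obstacle will be cleanly separating the two ways the adversary could cause Alice's verification to accept an incorrect polynomial: (i) a hash collision on Bob's freshly sampled seed $s$, and (ii) a successful ECC+AMD manipulation that turns Bob's transmitted codeword into a different valid codeword whose decoded pair happens to match a fingerprint of $P_a$. Item (ii) is exactly the AMD Error bad event handled in its own lemma; once ruled out, the argument reduces to a clean application of Theorem~\ref{thm:hash} together with the geometric-style sum above.
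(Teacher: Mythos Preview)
Your proposal is correct and follows essentially the same approach as the paper: invoke Theorem~\ref{thm:hash} to bound the per-round collision probability by $\eta_j$ and then union-bound $\sum_{j\geq 1}\eta_j \leq \delta/3$. Your write-up is in fact more careful than the paper's, which simply asserts the per-round bound without spelling out why (outside the separately-handled AMD Error event) Bob's termination with $P_b\neq P_a$ reduces to a Naor--Naor collision; the paper relies on its earlier decomposition into three disjoint bad events to justify this implicitly.
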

	\begin{proof}
		Let $\xi$ be the event that Bob does not have the correct guess of Alice's message upon termination. Note that in round $j$, from Theorem~\ref{thm:hash}, the fingerprints fail with probability at most $\eta_j$. Using a union bound over these rounds, we get
		\begin{equation*}
			\begin{split}
				\Pr \{ \xi \} \leq \sum_{j \geq 1}\eta_j = \sum_{j \geq 1} \frac{\delta}{6d} \left( \frac{1}{2} \right)^{\left \lfloor j/d \right \rfloor} \leq \frac{\delta}{6} \sum_{j \geq 0} (1/2)^{j} = \frac{\delta}{3}
			\end{split}
		\end{equation*}
	\end{proof}
	
	\subsubsection{AMD Failure} 
	\begin{lemma}
		\label{lem:amdFail}
		The probability of AMD failure is at most $\delta/3$.
	\end{lemma}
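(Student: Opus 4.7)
The plan is to mirror the structure of Lemma~\ref{lem:correctness}, union-bounding the AMD failure probability over all AMD-encoded messages sent during the protocol and then using the geometric decay of $\eta_j$.

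First I would identify every place an AMD encoding is used. In each round $j \geq 1$, Alice sends $\ecc{\amdenc{M_a}{\eta_j}}$ (the next two polynomial evaluations) and Bob sends $\ecc{f_b}$ where $f_b = \amdenc{\fp{s}{P_b}{\eta_j}{L}}{\eta_j}$. So each round contributes a constant number of AMD-encoded messages (at most two) that the adversary could attempt to corrupt into a valid codeword of a different message. By Theorem~\ref{thm:amd}, item (3), each such corruption goes undetected with probability at most $\eta_j$, where the probability is over the internal randomness of the AMD encoding. Note that the ECC outer layer does not affect this analysis: as observed in the linearity remark right after Theorem~\ref{thm:ecc}, decoding is linear, so corrupting the ECC-then-AMD-encoded message is equivalent to adding some (adversarially influenced but oblivious to the AMD randomness) string to the AMD codeword, which is exactly the setting of Theorem~\ref{thm:amd}.

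Let $\xi$ be the event that at least one AMD failure occurs, and let $\xi_j$ be the event that an AMD failure occurs in round $j$. Taking a union bound over rounds and over the (constantly many) AMD encodings within a round, and using $\eta_j = (1/2)^{\lfloor j/d\rfloor}\delta/(6d)$, I would compute
\begin{equation*}
\Pr\{\xi\} \leq \sum_{j \geq 1} \Pr\{\xi_j\} \leq \sum_{j \geq 1} \eta_j = \sum_{j \geq 1} \frac{\delta}{6d}\left(\frac{1}{2}\right)^{\lfloor j/d \rfloor} \leq \frac{\delta}{6}\sum_{k \geq 0}\left(\frac{1}{2}\right)^k = \frac{\delta}{3},
\end{equation*}
where the penultimate step groups together every block of $d$ consecutive rounds (on which $\lfloor j/d \rfloor$ is constant), identical to the reduction used in the proof of Lemma~\ref{lem:correctness}.

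I expect the only real subtlety to be making sure the per-round count of AMD uses is absorbed into the constant in $\eta_j = \delta/(6d)$, so that the final bound really is $\delta/3$ rather than a small multiple of it; this is a matter of bookkeeping and the $6d$ in the denominator of $\eta_j$ is chosen precisely so that the geometric sum telescopes to $\delta/3$. No other step is technically hard, since Theorem~\ref{thm:amd} does all the heavy lifting and the linearity observation after Theorem~\ref{thm:ecc} handles the ECC-then-AMD composition.
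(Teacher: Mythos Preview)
Your proposal is correct and follows essentially the same approach as the paper's own proof: union-bound the per-round AMD failure probability $\eta_j$ over all rounds $j\ge 1$ and evaluate the resulting sum as $\sum_{j\ge 1}\eta_j \le (\delta/6)\sum_{k\ge 0}(1/2)^k = \delta/3$. The paper's proof is terser---it simply asserts that the round-$j$ AMD failure probability is at most $\eta_j$ without enumerating the AMD uses or invoking the ECC linearity remark---but the computation and structure are identical to yours.
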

	\begin{proof}
		Note that in round $j$, from Theorem~\ref{thm:amd}, AMD failure occurs with probability at most $\eta_j$. Hence, using a union bound over the rounds, the AMD failure occurs with probability $\sum_{j \geq 1}\eta_j = \sum_{j \geq 1} \frac{\delta}{6d} \left( \frac{1}{2} \right)^{\left \lfloor j/d \right \rfloor} \leq \frac{\delta}{6} \sum_{j \geq 0} (1/2)^{j} = \frac{\delta}{3}$.
	\end{proof}

	\subsection{Probability of Failure}

	\begin{lemma}
	  \label{lem:prob}
	Our algorithm succeeds with probability at least $1 - \delta$. 
		\end{lemma}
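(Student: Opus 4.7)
The plan is to reduce the claim to a union bound over the three bad events already enumerated at the start of the analysis section, namely \emph{Unintentional Silence}, \emph{Fingerprint Error}, and \emph{AMD Error}. The first step is to argue that if none of these three bad events occurs, then the algorithm behaves as intended: Bob will only terminate (step 16 of his algorithm) after verifying that the echoed fingerprint matches $f_b$, which, in the absence of a fingerprint error and an AMD error, guarantees that the polynomial $P_b$ Bob holds equals Alice's polynomial $P_a$, and hence Bob recovers $M$ correctly; meanwhile, in the absence of unintentional silence, Alice does not terminate before Bob, so the interaction proceeds round-by-round until Bob's termination condition is reached.

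Next, I would invoke the three preceding lemmas directly: Lemma~\ref{lem:term} bounds the probability of unintentional silence by $\delta/3$, Lemma~\ref{lem:correctness} bounds the probability of a fingerprint collision causing Bob to accept a wrong polynomial by $\delta/3$, and Lemma~\ref{lem:amdFail} bounds the probability of an AMD corruption going undetected by $\delta/3$. A union bound over these three events then yields a total failure probability of at most $\delta$, so the algorithm succeeds with probability at least $1-\delta$, as required.

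The main (and really only) subtlety is step one: one must be a little careful to check that the implication ``none of the three bad events occurs $\Rightarrow$ Bob terminates with the correct message'' actually holds across \emph{all} rounds, not just within a single round. In particular, one should note that a false-positive match of the echoed fingerprint in any round is covered by the fingerprint-error event (which was union-bounded across all rounds in Lemma~\ref{lem:correctness}), that any adversarial modification of an AMD-encoded transmission that still decodes to a valid codeword is covered by the AMD-failure event (union-bounded across all rounds in Lemma~\ref{lem:amdFail}), and that Alice's early termination in any round is covered by the unintentional-silence event (union-bounded in Lemma~\ref{lem:term}). Once this is observed, the proof is essentially a one-line union bound and requires no further calculation.
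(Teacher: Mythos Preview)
Your proposal is correct and follows essentially the same approach as the paper: a union bound over the three bad events via Lemmas~\ref{lem:term},~\ref{lem:correctness}, and~\ref{lem:amdFail}, together with the observation that absent these events the protocol behaves correctly. The only small addition in the paper's version is an explicit remark that, because $T$ is finite, Bob is guaranteed to eventually obtain the correct polynomial and terminate; you implicitly assume this when you write ``until Bob's termination condition is reached,'' so you may want to make that finiteness argument explicit.
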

	\begin{proof}
	Lemmas~\ref{lem:term},~\ref{lem:correctness} and~\ref{lem:amdFail} ensure that the three bad events, as defined previously, each happen with probability at most $\delta/3$. Hence, using a union bound over the occurrence of these three events, the total probability of failure of the algorithm is at most $\delta$. If the three bad events do not occur, then Alice will continue to send evaluations of the polynomial until Bob has the correct message. Since $T$ is finite, Bob will eventually have the correct message and terminate.
	\end{proof}

	\subsection{Cost to the algorithm}
	Recall that Alice and Bob compute their polynomials $P_a$ and $P_b$, respectively, over $GF(q)$. We refer to every $(x,y) \in GF(q) \times GF(q)$ that Bob stores after receiving the  evaluation $y$, that has potentially been tampered with, of the polynomial $P_a$ at $x$ from Alice as a \emph{\ball}. We call a \ball\ $(x,y)$ in Bob's set $\mathcal{B}$ \emph{good} if $P_a(x)=y$ and \emph{bad} otherwise.\\ 
	
	\par We begin by stating two important lemmas that relate the number of bits flipped by the adversary to make $m$ \balls\ bad to the number of bits required to send them.
	
	\begin{lemma}
	\label{lem:corr}
		Let $f(m)$ be the number of bits flipped by the adversary to make $m$ \balls\ bad. Then, $f(m) \geq m$ if $m \leq d+1$, and \[ f(m) \geq (d+1)+\frac{\bjconst}{6}\left( (m-d-1)\log (6Ld/\delta) + \frac{(m-d-3)^2}{4d}\right) \]otherwise.
	\end{lemma}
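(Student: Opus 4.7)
The plan is to separately account for the bad tuples arising from Alice's initial plaintext batch of $d+1$ evaluations, and those arising from the ECC/AMD-encoded resends in subsequent rounds.

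The case $m \leq d+1$ is immediate: Alice's initial batch sends each of the first $d+1$ evaluations in plaintext, so making one tuple bad costs at least one bit flip, yielding $f(m) \geq m$.

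For $m > d+1$, I would argue that the cheapest strategy for the adversary is to corrupt all $d+1$ initial tuples (contributing at least $d+1$ bit flips) and to generate the remaining $m - (d+1)$ bad tuples through resends. In round $j$ Alice sends $\ecc{\amdenc{M_a}{\eta_j}}$, a transmission whose length is on the order of $b_j = \bjconst \log(L/\eta_j)$. By Theorem~\ref{thm:ecc}, flipping fewer than a third of its bits leaves $\texttt{ecDec}(\cdot)$ unchanged, so Bob recovers the correct pair of evaluations; thus producing any bad tuple in round $j$ requires at least $b_j/3$ bit flips. Since each round can contribute at most two bad tuples, at least $k := \lceil(m-d-1)/2\rceil$ rounds must be corrupted, and since $\eta_j$ is non-increasing in $j$ (and so $b_j$ is non-decreasing), the adversary minimizes cost by corrupting rounds $1, 2, \ldots, k$.

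Substituting $\eta_j = 2^{-\lfloor j/d\rfloor}\delta/(6d)$ gives $b_j = \bjconst(\log(6Ld/\delta) + \lfloor j/d\rfloor)$, so the total resend cost is
\[
\sum_{j=1}^k \frac{b_j}{3} = \frac{\bjconst}{3}\left[k\log(6Ld/\delta) + \sum_{j=1}^k \left\lfloor \frac{j}{d} \right\rfloor\right].
\]
The first bracket term contributes at least $\frac{\bjconst(m-d-1)}{6}\log(6Ld/\delta)$ via $k \geq (m-d-1)/2$. For the second, I would evaluate the floor sum directly and bound it below by $(k-1)^2/(2d)$, and then use $2(k-1) \geq m - d - 3$ to extract the $\frac{\bjconst(m-d-3)^2}{24d}$ contribution. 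Adding the initial-batch cost $(d+1)$ gives the claimed inequality. The principal technical obstacle is controlling the floor sum $\sum_{j=1}^k \lfloor j/d\rfloor$ together with the $\lceil\cdot/2\rceil$ rounding in $k$; the slack from the $(m-d-3)^2$ (rather than $(m-d-1)^2$) in the lemma statement is precisely what absorbs these integer-rounding losses.
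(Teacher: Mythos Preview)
Your proposal is correct and follows essentially the same approach as the paper: decompose $m$ into the plaintext tuples (at most $d+1$, each costing at least one bit flip) and the encoded tuples (sent two per round, each round costing at least $b_j/3$ to corrupt), then use monotonicity of $b_j$ to reduce to summing $b_j/3$ over the first $\lceil (m-d-1)/2\rceil$ rounds and bound the floor sum $\sum_{j=1}^{k}\lfloor j/d\rfloor$ from below. The paper is slightly terser---it writes the upper summation index as $m_2/2$ without the explicit ceiling and jumps directly to the final bound---whereas you spell out the $k=\lceil(m-d-1)/2\rceil$ and $2(k-1)\ge m-d-3$ steps, but the argument is the same.
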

	\begin{proof}
	Let $m = m_1 + m_2$, where $m_1 \leq d+1$ is the number of \balls\ that were not encoded and $m_2$ is the number of AMD and error-encoded \balls. Clearly, $f(m_1) = m_1$. Each of the remaining $m_2$ \balls\ are sent in pairs, one pair per round. Since the adversary needs to flip at least a third of the number of bits for each encoded \ball\ to make it bad, we have
	\begin{equation*}
		\begin{split}
			f(m) &\geq m_1 + \frac{1}{3}\sum_{j = 1}^{m_2/2}b_j \\
			&= m_1 + \frac{C}{3}\sum_{j = 1}^{m_2/2} \left( \log \left( \frac{6Ld}{\delta} \right) + \left \lfloor \frac{j}{d} \right \rfloor \right) \\
			&\geq m_1 + \frac{C}{6}\left( m_2 \log \left( \frac{6Ld}{\delta} \right) + \frac{(m_2-2)^2}{4d} \right)
		\end{split}
	\end{equation*}
	Since the number of bits per \ball\ increases monotonically, the expression above becomes $f(m) \geq m$ if $m \leq d+1$, and \[ f(m) \geq (d+1)+\frac{\bjconst}{6}\left( (m-d-1)\log (6Ld/\delta) + \frac{(m-d-3)^2}{4d}\right) \]otherwise.
	\end{proof}

	\begin{lemma}
	\label{lem:cost}
		Let $g(m)$ be the number of bits required to send $m$ \balls, where $m \geq d+1$. Then,
	\begin{equation*}
	g(m) \leq L + 5C\left( \frac{(m-d-1)}{2}\log(6Ld/\delta) + \frac{(m-d+1)^2}{8d} \right).
	\end{equation*}
	\end{lemma}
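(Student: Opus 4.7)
The plan is to split the total transmission cost into two parts: the cost of the initial plaintext batch of $d+1$ evaluations that Alice sends in line~5, and the cost of the subsequent rounds in which pairs of evaluations are sent under AMD and ECC encoding.

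For the plaintext batch, each of the $d+1$ field elements has $\log q$ bits, contributing $(d+1)\log q = \lceil L/\log q \rceil \log q \leq L + \log q$ bits in total. I account for the leading $L$ here and absorb the $O(\log q)$ ceiling slack into the overhead of the first encoded round, which itself already contributes $\Omega(\log L)$ bits via $b_1$.

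For the encoded phase, the remaining $m - d - 1$ balls are exchanged in pairs over $(m - d - 1)/2$ rounds. I would bound the total number of bits transmitted on the channel in round $j$ by $5 b_j$, where $b_j = \bjconst\log(L/\eta_j)$. Inspecting the two algorithms, each round consists of a constant number of transmissions: Alice sends either $\ecc{f}$ or $\zero{b_j}$ and then $\ecc{\amdenc{M_a}{\eta_j}}$, while Bob sends $\ecc{f_b}$ and possibly a random $b_j$-bit string. Using Theorems~\ref{thm:hash},~\ref{thm:amd}, and~\ref{thm:ecc} together with the choice $\bjconst = \max\{19,\naorconst+\amdconst+\eccconst C_s\}$, each such message has size $O(b_j)$, and summing across the constant number of sends per round gives a per-round total bounded by $5 b_j$.

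The remaining step is direct arithmetic. Substituting $\eta_j = (1/2)^{\lfloor j/d\rfloor}\delta/(6d)$ gives
\begin{equation*}
\sum_{j=1}^{(m-d-1)/2} 5 b_j \;=\; 5\bjconst \sum_{j=1}^{(m-d-1)/2}\left(\log\frac{6Ld}{\delta} + \left\lfloor\frac{j}{d}\right\rfloor\right).
\end{equation*}
The constant summand yields $\frac{5\bjconst (m-d-1)}{2}\log(6Ld/\delta)$. For the other, I would use $\sum_{j=1}^{n}\lfloor j/d\rfloor \leq n(n+1)/(2d) \leq (n+1)^2/(2d)$, which with $n = (m-d-1)/2$ evaluates to $(m-d+1)^2/(8d)$. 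Adding this to the $L$ from the plaintext phase gives the claimed bound.

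The main obstacle I anticipate is the size accounting in the per-round step, specifically verifying that every AMD- and ECC-wrapped message (including $\ecc{\amdenc{M_a}{\eta_j}}$, whose raw payload is $2\log q$ bits of field elements rather than a fingerprint) actually fits inside a small constant multiple of $b_j$ bits for the stated value of $\bjconst$. Once that is pinned down, the remaining calculation is bookkeeping.
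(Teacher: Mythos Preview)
Your proposal is correct and follows essentially the same approach as the paper: bound the plaintext phase by $L$, bound each subsequent round by $5b_j$, and then sum $\sum_{j=1}^{(m-d-1)/2} b_j$ using $\lfloor j/d\rfloor \le (n+1)^2/(2d)$ to obtain the stated expression. The paper is in fact terser than you are---it simply asserts ``each round involves exchange of at most $5$ messages'' of size $b_j$ without the per-message size accounting you sketch---so your flagged obstacle about $\ecc{\amdenc{M_a}{\eta_j}}$ is a detail the paper also leaves implicit.
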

	\begin{proof}
	If $m < d+1$, then we have $g(m) = m\log q \leq L$, since each of these $m$ \balls\ is of length $\log q$. For $m>d+1$, taking into account the fact that each round involves exchange of at most $5$ messages between Alice and Bob, we get
	\begin{equation*}
		\begin{split}
			g(m) &\leq L + 5\sum_{j=1}^{(m-d-1)/2}b_j \\
			&= L + 5C\sum_{j=1}^{(m-d-1)/2} \left( \log \left( \frac{6Ld}{\delta} \right) + \left \lfloor \frac{j}{d} \right \rfloor \right) \\
			&\leq L + 5C\left( \frac{(m-d-1)}{2}\log(6Ld/\delta) + \frac{(m-d+1)^2}{8d} \right)
		\end{split}
	\end{equation*}
	\end{proof}
	
	\begin{lemma}
		\label{lem:rounds}
		Let $L \geq 3$, and $r$ be any round at the end of which $P_b \neq P_a$. Then the number of bad \balls\ through round $r$ is at least $r/4$. 
		\end{lemma}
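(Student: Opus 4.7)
The plan is to apply Theorem~\ref{thm:reedSolomon} contrapositively to the set $\mathcal{S} = \texttt{maj}(\mathcal{B})$. If $P_b \neq P_a$ at the end of round $r$, then the counts $g', b'$ of good and bad entries of $\mathcal{S}$ must satisfy $g' \leq b' + d$, which gives $b' \geq (|\mathcal{S}| - d)/2$. I would then connect this bound on $b'$ back to the total bad ball count $B$ in $\mathcal{B}$.

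For the bookkeeping, I would first observe that after $r$ rounds Bob has received exactly $d + 1 + 2r$ balls (the $d+1$ initial evaluations plus two per round). Since Alice sends evaluations cyclically through $GF(q)$, each $x \in GF(q)$ appears in $\mathcal{B}$ with multiplicity $n_x \in \{k, k+1\}$, where $k = \lfloor (d+1+2r)/q \rfloor$, and $|\mathcal{S}| = \min(d+1+2r, q)$. A bad entry in $\mathcal{S}$ at $x$ requires at least $\lceil n_x/2 \rceil$ bad balls at that $x$ (otherwise the majority would be good), so even after the adversary steers his bad majorities onto the $x$'s of smallest multiplicity, we still obtain $B \geq b' \cdot \lceil k/2 \rceil$ in the wrap regime, and $B \geq b'$ in the no-wrap regime where each $n_x = 1$.

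Next I would case-split. In the no-wrap case $d + 1 + 2r \leq q$, every $n_x = 1$, $|\mathcal{S}| = d + 1 + 2r$, and $b' \geq r + 1/2$, so $B \geq b' \geq r/4$ trivially. In the wrap-around case $d + 1 + 2r > q$, $|\mathcal{S}| = q$ and $b' \geq (q - d)/2$ with $k \geq 1$, so $B \geq (q - d) \lceil k/2 \rceil / 2$. At this point I would invoke the auxiliary inequality $q \geq 2d$, which holds for $L \geq 3$ because $q = 2^{\lceil \log L \rceil} \geq L$ and $d \leq L/\log q$ combine to give $q/d \geq \log q \geq 2$. I would then verify $B \geq r/4$ by sub-casing on $k$: for $k = 1$, the range of $r$ consistent with $k = 1$ satisfies $r \leq (2q - d - 1)/2 \leq 2(q-d)$, so $B \geq (q-d)/2 \geq r/4$; for $k \geq 2$, $r \leq (q(k+1) - d - 1)/2$, and a short algebraic manipulation using $q \geq 2d$ yields $k(q - 2d) \geq q - d - 1$, which rearranges to $(q-d)k/4 \geq r/4$.

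The main obstacle is the wrap-around case, particularly the $k = 1$ transitional regime in which the multiplicity-based bound $B \geq (q-d)/2$ must absorb values of $r$ stretching up to roughly $q$; it is precisely here that the slack $q \geq 2d$ is needed to keep $B$ ahead of $r/4$. The higher regimes $k \geq 2$ follow by similar but easier estimates because the multiplicity $k$ grows roughly as $2r/q$, so the bad ball count scales linearly in $r$.
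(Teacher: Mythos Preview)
Your approach is essentially the same as the paper's: apply Theorem~\ref{thm:reedSolomon} contrapositively to $\texttt{maj}(\mathcal{B})$ to lower-bound the number of bad \bins, convert that to a lower bound on bad \balls\ via the per-\bin\ multiplicity, and split into the no-wrap case $d+2r+1 \le q$ and the wrap case $d+2r+1 > q$. The paper handles the wrap case in one stroke via the estimate
\[
b_t \;\ge\; \tfrac{1}{2}\left\lfloor \frac{(d+2r+1)(q-d)}{2q}\right\rfloor
\]
combined with $d/q \le 1/3$, whereas you sub-case on $k = \lfloor (d+1+2r)/q \rfloor$; these are just two bookkeepings of the same idea.

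There is, however, one genuine gap. In your $k \ge 2$ sub-case you assert that ``using $q \ge 2d$'' yields $k(q-2d) \ge q-d-1$. This does not follow: for $k=2$ the inequality $2(q-2d) \ge q-d-1$ is equivalent to $q \ge 3d-1$, which is strictly stronger than $q \ge 2d$ for large $d$. The fix is already implicit in your own derivation: from $q \ge L$ and $d < L/\log q$ you get $q/d > \log q$, and for $L \ge 5$ one has $\log q \ge 3$, hence $q \ge 3d$; the residual cases $L \in \{3,4\}$ have $q=4$, $d=1$ and are immediate. This is precisely the bound $d/q \le 1/3$ that the paper invokes. With $q \ge 3d$ in hand, your $k=1$ and $k \ge 2$ estimates both go through and the proof closes.
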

		\begin{proof}
		We call a \bin\ $x \in GF(q)$ \emph{good} if $(x,P_a(x)) \in \texttt{maj}(\mathcal{B})$, and \emph{bad} otherwise. Let $g_e$ be the number of good \bins\ and $b_e$ be the number of bad \bins\ up to round $r$. Similarly, let $g_t$ be the number of good \balls\ and $b_t$ be the number of bad \balls\ up to round $r$. Then, from Theorem~\ref{thm:reedSolomon}, we must have $b_e \geq g_e - d$. Note that the total number of \bins\ for which Bob has received \balls\ from Alice through round $r$ is $b_e + g_e = \min (d+2r+1,q)$. Adding this equality to the previous inequality, we have 
		\begin{equation}
		\label{eq:eq2}
		b_e \geq \frac{1}{2}\min (2r+1,q-d).	
		\end{equation}

		The total number of \balls\ received by Bob up to round $r$ is given by
		\begin{equation}
		\label{eq:eq4}
			b_t + g_t = d+2r+1.
		\end{equation}
		Note that every bad \bin\ is associated with at least $\left \lfloor \frac{b_t + g_t}{2 (b_e + g_e)} \right \rfloor$ \balls. This gives $b_t \geq b_e\left \lfloor \frac{b_t + g_t}{2 (b_e + g_e)} \right \rfloor$. Using this inequality with Eqs.~\eqref{eq:eq2} and~\eqref{eq:eq4}, we have
		\begin{equation}
		\begin{split}
			b_t &\geq \frac{1}{2}\min (2r+1,q-d) \left \lfloor \frac{d+2r+1}{2 \min (d+2r+1,q)} \right \rfloor \\
			&\geq \frac{1}{2} \left \lfloor \frac{d+2r+1}{2 \min (d+2r+1,q)} \min (2r+1,q-d) \right \rfloor
		\end{split}
		\end{equation}
		\textbf{Case I: $\mathbf{(d+2r+1 \leq q)}$} For this case, we have 
		\begin{equation}
			\label{eq:4a}
			\frac{1}{2}\left \lfloor \frac{d+2r+1}{2 \min (d+2r+1,q)}\min (2r+1,q-d) \right \rfloor = \frac{1}{2}\left \lfloor \frac{2r+1}{2} \right \rfloor \geq \frac{r}{4}
		\end{equation}
		\textbf{Case II: $\mathbf{(d+2r+1 > q)}$} For this case, we have 
		\begin{equation}
			\label{eq:4b}
			\begin{split}
			\frac{1}{2}\left \lfloor \frac{d+2r+1}{2 \min (d+2r+1,q)}\min (2r+1,q-d) \right \rfloor &= \frac{1}{2}\left \lfloor \frac{(d+2r+1)(q-d)}{2q} \right \rfloor  \\
			&\geq \frac{1}{2}\left \lfloor \frac{2r+1}{2}\left( 1 - \frac{d}{q} \right) \right \rfloor \\
			&\geq \frac{r}{4}
			\end{split}
		\end{equation}	
		where the last inequality holds since $d/q \leq 1/3$ for $L \geq 3$.\\
		\par Combining Eqs.~\eqref{eq:4a} and~\eqref{eq:4b}, we get $b_t \geq r/4$.
		\end{proof}
	
	\par We now state a lemma that is crucial to the proof of Theorem~\ref{thm:mainThm}.
	
	\begin{lemma}
	  \label{lem:totalcost}
	If Bob terminates before Alice, the total number of bits sent by our algorithm is \[ L + O \left( T + \min \left(T+1,\frac{L}{\log L} \right) \log \left( \frac{L}{\delta} \right) \right). \]
	\end{lemma}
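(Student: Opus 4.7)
The plan is to reduce the statement to a bound on the total number of rounds $R$ executed and then invoke Lemma~\ref{lem:cost}. Since the $m = d+1+2R$ polynomial evaluation tuples that Bob ultimately collects account for all bits transmitted, Lemma~\ref{lem:cost} immediately gives total bit cost at most $L + O(R\log(L/\delta) + R^2/d)$, so the remaining task is to upper-bound $R$ as a function of $T$.

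To bound $R$, I would let $r^\ast$ be the last round at the end of which $P_b \neq P_a$ (or $r^\ast = 0$ if no such round exists). By Lemma~\ref{lem:rounds} the adversary has produced at least $r^\ast/4$ bad polynomial evaluation tuples, so by Lemma~\ref{lem:corr} he must have spent at least $f(r^\ast/4)$ bits doing so. In each remaining round $r^\ast < j < R$, $P_b$ already agrees with $P_a$ yet Bob still fails to terminate; conditional on the three bad events of Section~\ref{sec:analysisKnownL} not occurring, this can happen only if the adversary flips at least $b_j/3$ bits on either Bob's AMD- and ECC-encoded fingerprint or Alice's ECC-encoded echo. Summing these per-round lower bounds and using that $b_j$ grows by roughly one every $d$ rounds yields
\[
T \;\geq\; f(r^\ast/4) \;+\; \Omega\!\left((R - r^\ast)\log(L/\delta) + (R - r^\ast)^2/d\right).
\]

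The final step is a two-case analysis on $r^\ast$. When $r^\ast \leq 4(d+1)$, the linear branch of Lemma~\ref{lem:corr} yields $r^\ast = O(\min(T, d+1))$, while the inequality above gives $R - r^\ast = O(\min(T/\log(L/\delta), \sqrt{Td}))$. When $r^\ast > 4(d+1)$, writing $u = r^\ast/4 - (d+1)$ and using the nonlinear branch of Lemma~\ref{lem:corr} bounds $u$ by the same expression; and because this second regime forces $T = \Omega(d)$, one has $\min(T+1, d) = \Theta(d)$ there. In either case, applying the $T/\log(L/\delta)$ upper bound to the linear term $R\log(L/\delta)$ and the $\sqrt{Td}$ upper bound to the quadratic term $R^2/d$ collapses the cost expression to $L + O\!\left(T + \min(T+1, L/\log L)\log(L/\delta)\right)$, which is the claimed bound (using $d = \Theta(L/\log L)$).

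The main obstacle I anticipate is keeping the case analysis crisp so that the quadratic $R^2/d$ term behaves: naively plugging in $R = O(T/\log(L/\delta))$ would blow this term up to $\Omega(T^2/(d\log^2(L/\delta)))$ in the large-$T$ regime. The square-root bound $R - r^\ast = O(\sqrt{Td})$, which arises only because $b_j$ grows linearly in $j$ after every $d$ rounds, is what forces $R^2/d = O(T)$ and recovers the asymptotically optimal $L + O(T)$ cost when $T = \Omega(L)$.
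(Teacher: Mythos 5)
Your proposal is correct and follows essentially the same route as the paper's proof: you split at the last round $r^\ast$ with $P_b \neq P_a$, combine Lemma~\ref{lem:rounds} with Lemma~\ref{lem:corr} to lower-bound the adversary's spending, apply Lemma~\ref{lem:cost} with the same two-case analysis on $r^\ast$, and use the same $\sqrt{Td}$ bound to tame the quadratic term. The only cosmetic difference is that the paper bounds the post-$r^\ast$ cost directly as a separate $O(T + \log(L/\delta))$ term rather than folding those rounds into a single round-count bound $R$ as you do.
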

	\begin{proof}
		Let $r'$ be the last round at the end of which $P_b \neq P_a$, or $0$ if $P_b = P_a$ at the end of round $1$ and for all subsequent rounds. Let $T_1$ be the number of bits corrupted by the adversary through round $r'$. Let $A_1$ represent the total cost through round $r'$ and $A_2$ be the cost of the algorithm after round $r'$. Note that after round $r'$, the adversary must corrupt one of either (1) the fingerprint, or (2) its echo, or (3) silence on the channel in Step 15 of Alice's algorithm, in every round to delay termination. Also, after round $r'$, Alice and Bob must exchange at least a fingerprint and an echo even if $T=0$. Thus, we have,
		\begin{equation}
		\label{eq:a2}	
		A_2 = O(T + \log(L/\delta))
		\end{equation}
	
		\par Recall that the number of \balls\ sent up to round $r'$ is $d+2r'+1$. Then, from Lemma~\ref{lem:cost}, we have 
		\begin{equation}
		\label{eq:eq10}
		\begin{split}
		A_1 &\leq g(d+2r'+1) \\
		&\leq L + 5C\left( r'\log(6Ld/\delta) + \frac{(r'+1)^2}{2d} \right).
		\end{split}
		\end{equation} 
		
		 From Lemma~\ref{lem:rounds}, we have that the number of bad \balls\ is at least $\lceil r'/4 \rceil$. Thus, from Lemma~\ref{lem:corr}, we have $T_1 \geq f(\lceil r'/4 \rceil)$, which implies $T_1 \geq r'/4$ if $r'/4 \leq d+1$. Otherwise, we have
		\begin{equation}
		\label{eq:eq6}
		T_1 \geq 
	 (d+1)+\frac{C}{6}\left( (r'/4 - d - 1)\log (6Ld/\delta) + \frac{(r'/4-d+3)^2}{4d} \right)
		\end{equation}
		\\
		\textbf{Case I : $\mathbf{(r'/4 \leq d+1)}$} Since $T_1$ is at least the number of bad \balls, from Lemma~\ref{lem:rounds}, we have $T_1 \geq r'/4$, which gives $r' \leq \min (4T_1, 4(d+1))$. Hence, using Eq~\eqref{eq:eq10}, we get,
		\begin{align}
			A_1 &\leq L + 5C\left( r'\log(6Ld/\delta) + \frac{(r'+1)^2}{2d} \right) \nonumber \\
			&\leq L + 5C\left( \min (4T_1, 4(d+1)) \log(6Ld/\delta) + \frac{(4d+5)^2}{2d} \right) \nonumber \\
			&= L + O \left( \min \left(T_1, \frac{L}{\log L} \right)\log (L/\delta) + \frac{L}{\log L}\right) \label{eq:eq11}
		\end{align}
		where the last equality holds because $d \leq L/\log L+1$. \\
		\\
		\textbf{Case II : $\mathbf{(r'/4 > d+1)}$} From Eq.~\eqref{eq:eq6}, we have 
		\begin{equation}
		\label{ineq:t1}
			T_1 \geq (d+1)+\frac{C}{6}\left( (r'/4 - d - 1)\log (6Ld/\delta) + \frac{(r'/4-d+3)^2}{4d} \right).
		\end{equation}
		Since each summand in the inequality above is positive and $C > 6$, we get $(r'/4 - d - 1)\log (6Ld/\delta) \leq T_1$, which gives 
		\begin{equation}
			\label{eq:eq12a}
			r'\log(6Ld/\delta) \leq 4T_1 + 4(d+1)\log(6Ld/\delta).
		\end{equation}
		Since $\frac{(r'/4-d+3)^2}{4d} \leq T_1$, we have $r' \leq 8\sqrt{T_1d} + 4d - 12$. Building on this, we get,
		\begin{equation}
			\label{eq:eq12b}
			\frac{(r'+1)^2}{2d} \leq \frac{\left( 8\sqrt{T_1d} + 4d - 11 \right)^2}{2d}
		\end{equation}
		Hence, from Eqs.~\eqref{eq:eq10},~\eqref{eq:eq12a} and~\eqref{eq:eq12b} , we get 
	\begin{align}
				A_1 &\leq L + 5C\left( r'\log(6Ld/\delta) + \frac{(r'+1)^2}{2d} \right) \nonumber \\
				&\leq L + 5C \left( 4T_1 + 4(d+1)\log(6Ld/\delta) + \frac{\left( 8\sqrt{T_1d} + 4d - 11 \right)^2}{2d} \right) \nonumber\\
				&= L + O \left( T_1 + \left( \frac{L}{\log L} \right)\log(L/\delta) \right) \label{eq:eq13}
			\end{align}
	where the last equality holds because $d \leq L/\log L+1$ and $T_1 \geq d+1$ from inequality~\eqref{ineq:t1}.\\
	
	\par Combining Eqs.~\eqref{eq:a2},~\eqref{eq:eq11} and~\eqref{eq:eq13}, the total number of bits sent by the algorithm becomes \[ A_1 + A_2 = L + O \left( T + \min \left(T+1,\frac{L}{\log L} \right) \log \left( \frac{L}{\delta} \right) \right)\]
	\end{proof}
	Putting it all together, we are now ready to state our main theorem.
	
	\begin{mythm}
		\label{thm:mainThm2}
		Our algorithm tolerates an unknown number of adversarial errors, $T$, and for a given $\delta \in (0,1)$, succeeds with probability at least $1-\delta$, and sends $L + O \left( T + \min \left(T+1,\frac{L}{\log L} \right) \log \left( \frac{L}{\delta} \right) \right)$ bits.	
			
	\end{mythm}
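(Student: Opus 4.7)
The plan is to combine the failure-probability bound of Lemma~\ref{lem:prob} with the deterministic bit-count analysis of Lemma~\ref{lem:totalcost}; the present theorem is essentially a packaging step that assembles the aggregate guarantee in the form advertised in the introduction (Theorem~\ref{thm:mainThm}), so almost all of the heavy lifting has already been done in the supporting lemmas.

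First, I would invoke Lemma~\ref{lem:prob} directly to obtain the ``succeeds with probability at least $1-\delta$'' half of the claim. Recall that failure was defined as the disjunction of three bad events, namely unintentional silence, fingerprint collision, and an undetected AMD corruption, each bounded individually at $\delta/3$ by Lemmas~\ref{lem:term},~\ref{lem:correctness}, and~\ref{lem:amdFail} respectively, and then combined via a union bound. No extra probabilistic argument is needed at this stage.

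Next, I would observe that on the success event the ``unintentional silence'' bad event in particular does not occur, so Bob terminates before Alice. This is precisely the hypothesis required by Lemma~\ref{lem:totalcost}, whose conclusion directly yields the bit count $L + O\!\left(T + \min\!\left(T+1,\frac{L}{\log L}\right)\log\!\left(\frac{L}{\delta}\right)\right)$. Because the analysis inside Lemma~\ref{lem:totalcost} is deterministic once the hypothesis and $T$ are fixed, no further averaging over the private randomness is required here; the bound holds pointwise on the success event.

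The main obstacle has already been absorbed into Lemma~\ref{lem:totalcost}, where the case split on whether $r'/4 \le d+1$ (fed by Lemmas~\ref{lem:corr},~\ref{lem:cost}, and~\ref{lem:rounds}) is what produces the $\min$ inside the big-$O$. For the present theorem the only thing to verify carefully is that ``success'' as defined in Lemma~\ref{lem:prob} really does imply ``Bob terminates before Alice,'' so that the two lemmas can be chained; this is immediate from the definition of the unintentional-silence bad event. Combining the two steps then gives the theorem as stated.
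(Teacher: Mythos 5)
Your proposal matches the paper's own proof: it invokes Lemma~\ref{lem:prob} for the $1-\delta$ success probability and then applies Lemma~\ref{lem:totalcost} on the event that Bob terminates before Alice to obtain the stated bit count. The extra remark that success rules out the unintentional-silence event, and hence supplies the hypothesis of Lemma~\ref{lem:totalcost}, is a correct and slightly more explicit version of the same chaining the paper performs.
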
	\begin{proof}
		By Lemmas~\ref{lem:prob}, with probability at least $1-\delta$, Bob terminates before Alice with the correct message. If this happens, then by Lemma~\ref{lem:totalcost}, the total number of bits sent is \[ L + O \left( T + \min \left(T+1,\frac{L}{\log L} \right) \log \left( \frac{L}{\delta} \right) \right) \]
	\end{proof}

\section{Unknown $L$}
\label{sec:unknownL}
We now discuss an algorithm for the case when the message length $L$ is unknown to Bob. The only parameter now known to both Alice and Bob is $\delta$.

Our main idea is to make use of an algorithm from~\cite{aggarwal2016secure}, which enables Alice to send a message of unknown length to Bob in our model, but is inefficient. \footnote{We refer the reader to~\cite{aggarwal2016secure} for details on this algorithm; we discuss only its use in this paper.}  We thus use a two phase approach.  First, we send the \emph{length} of the message $M$ (i.e. a total of $\log L$ bits) from Alice to Bob using the algorithms of~\cite{aggarwal2016secure}.  Second, once Bob learns the value $L$, we use the algorithm from Section~\ref{sec:knownL} to communicate the message $M$.  We will show that the total number of bits sent by this two phase algorithm is asymptotically similar to the case when the message length is known by Bob in advance.

\subsection{Algorithm Overview}
Let $\pi_1$ be a noise-free protocol in which Alice sends $L$ to Bob, who is unaware of the length ($\log L$ in this case) of the message. Let $\pi_2$ be a noise-free protocol in which Alice sends $M$ to Bob, who knows the length $L = |M|$ \textit{a priori}. W can write the noise-free protocol $\pi$ to communicate $M$ from Alice to Bob, who does not know $L$, as a composition of $\pi_1$ and $\pi_2$ in this order. Let $\pi'_1, \pi'_2$ and $\pi'$ be the simulations of $\pi_1,\pi_2$ and $\pi$, respectively, that are robust to adversarial bit flipping.

To simulate $\pi'$ with desired error probability $\delta > 0$, we proceed in two steps. We first make $\pi_1$ robust with $\delta_1 = \delta/2$ error tolerance using Algorithm $3$ from ~\cite{aggarwal2016secure}, setting $n = 2$. Then, we make $\pi_2$ robust with $\delta_2 = \delta/2$ error tolerance using Algorithms~\ref{aliceAlgo} and~\ref{bobAlgo}. This way, when we compose the robust versions of $\pi_1$ and $\pi_2$, we get $\pi'$ with error probability at most $\delta_1 + \delta_2 = \delta$ (by union bound). The correctness of $\pi'$ immediately follows from the correctness of $\pi'_1$ and $\pi'_2$, by construction.

\subsection{Probability of Failure}
The failure events for $\pi'$ are exactly the failure events for $\pi'_1$ and $\pi'_2$. In other words, we say $\pi'$ fails when one or both of $\pi'_1$ and $\pi'_2$ fail. Thus, the failure probability of $\pi'$ is at most $\delta/2 + \delta/2 = \delta$, by a simple union bound over the two sub-protocols.

\subsection{Number of bits sent}
To analyze the number of bits sent, let $T_1$ be the number of bits flipped by the adversary in $\pi'_1$ and $T_2$ be the number of bits flipped by the adversary in $\pi'_2$. Recall that the length of the message from Alice to Bob in $\pi'_1$ is $\log L$ and that in $\pi'_2$ is $L$. Let $A_1$ be the number of bits sent in $\pi'_1$ and $A_2$ be the number of bits sent in $\pi'_2$. Thus, using Theorem $1.1(2)$ from~\cite{aggarwal2016secure} (with $n=2, L = \log L, T = T_1, \delta_1 = \delta/2$ and $\alpha = 1$), we get \[ A_1 = O \left( \log L \cdot \log \log L + T_1 \right)\] Similarly, using Theorem~\ref{thm:mainThm2} from this paper (with $\delta_2 = \delta/2 $), we get \[A_2 = L + O\left( T_2 + \min \left( T_2+1, L/\log L \right) \log L) \right)\] Using $T = T_1 + T_2$, the total number of bits sent by $\pi'$ is then $A_1 + A_2 = L + O\left( T + \min \left( T+1, L/\log L \right) \log L \right)$.  The proof of Theorem~\ref{thm:mainThm} now follows directly from the above analysis.

Note that another approach to sending a message of unknown length from Alice to Bob would have been to directly use the algorithm in~\cite{aggarwal2016secure} with $n=2$. However, this would have incurred a higher blowup than the approach that we take in this paper. More specifically, when $T$ is small, the direct use of the multiparty algorithm gives a multiplicative logarithmic blowup in the number of bits, while our current approach maintains the constant overall blowup in the number of bits by using the heavy weight protocol for the length of the message instead (which is exponentially smaller than the message). 

\section{Conclusion}
\label{sec:conclusion}
	We have described an algorithm for sending a message over a two-way noisy channel.  Our algorithm is robust to an adversary that can flip an unknown but finite number of bits on the channel.  The adversary knows our algorithm and the message to be sent, but does not know the random bits of the sender and receiver, nor the bits sent over the channel.  The receiver of the message does not know the message length in advance.  
	
	Assume the message length is $L$, the number of bits flipped by the adversary is $T$, and $\delta > 0$ is an error parameter known to both players.  Then our algorithm sends an expected number of bits that is $L + O \left( T + \min \left(T+1,\frac{L}{\log L} \right) \log \left( \frac{L}{\delta} \right) \right)$, and succeeds with probability at least $1 - \delta$.  When $T = \Omega(L)$ and $\delta$ is polynomially small in $L$, the number of bits sent is $L + O\left( T \right)$, which is asymptotically optimal; and when $T = o(L/\log L)$, the number of bits sent is $L+o(L)$. 

Many open problems remain including the following.  First, Can we determine asymptotically matching upper and lower bounds on the number of bits required for our problem?  Our current algorithm is optimal for $T = \Omega(L)$, and seems close to optimal for $T = O(1)$, but is it optimal for intermediate values of $T$?  Second, Can we tolerate a more powerful adversary or different types of adversaries?  For example, it seems like our current algorithm can tolerate a completely omniscient adversary, if that adversary can only flip a chosen bit with some probability that is $1-\epsilon$ for some fixed $\epsilon>0$.  Finally, can we extend our result to the problem of sending our message from a source to a target in an arbitrary network where nodes are connected via noisy two-way channels? This final problem seems closely related to the problem of network coding~\cite{liew2013physical,matsuda2011survey,bassoli2013network}, for the case where the amount of noise and the message size is not known in advance.  In this final problem, since there are multiple nodes, we would likely also need to address problems of asynchronous communication. 
	
\bibliographystyle{ACM-Reference-Format}
\bibliography{ref.bib} 

\end{document}